\documentclass[lettersize,journal]{IEEEtran}
\usepackage{amsmath,amsfonts}
\usepackage{array}
\usepackage[caption=false,font=normalsize,labelfont=sf,textfont=sf]{subfig}
\usepackage{textcomp}
\usepackage{stfloats}
\usepackage{url}
\usepackage{verbatim}
\usepackage{graphicx}
\usepackage{cite}

\usepackage{hyperref}

\usepackage{amssymb}   

\usepackage{booktabs}

\usepackage{algorithm}
\usepackage{algpseudocode}

\newtheorem{theorem}{Theorem}
\newtheorem{lemma}{Lemma} 
\newtheorem{corollary}{Corollary} 
\newtheorem{remark}{Remark}
\newtheorem{definition}{Definition}

\newtheorem{assumption}{Assumption}
\newtheorem{proposition}{Proposition}
\newtheorem{proof}{Proof}

\usepackage{tikz}
\usetikzlibrary{arrows.meta}     
\usetikzlibrary{positioning}     
\usetikzlibrary{shapes.geometric} 
\usetikzlibrary{fit}             
\usetikzlibrary{calc}            
\usetikzlibrary{decorations.pathreplacing} 

\usepackage{pgfplots}
\pgfplotsset{compat=1.18}

\usepackage{xcolor}   

\begin{document}

\title{Noise Stability and Rotational Equivariance of Image-Induced Hamiltonian Spectra for Robust Unsupervised Segmentation}


\author{
    Soumic~Sarkar
    \thanks{
        S.~Sarkar is with the Institute of Technology,
        University of Tartu, Nooruse 1, 50411 Tartu, Estonia 
        (e-mail: soumic4it@gmail.com).
	}%
}



\maketitle

\begin{abstract}
Unsupervised segmentation methods that embed an image as the potential of a Schr\"{o}dinger-type Hamiltonian and read out object structure from its low-lying eigenstates have demonstrated strong empirical performance across natural and volumetric imagery, yet the robustness of this spectral representation to sensor noise and geometric transformation has never been established theoretically. This paper closes that gap. A discrete graph-Hamiltonian formulation is adopted, consistent with existing quantum-inspired segmentation frameworks, and an explicit finite-sample perturbation bound is derived that links the drift of its eigenvalues and ground-state eigenvector under additive noise of known variance to the operator's spectral gap, extending a one-dimensional noise-error analysis originally developed for semiclassical signal analysis to the two-dimensional graph setting used in image segmentation. The eigenvector-level bound is further propagated to the induced binary segmentation mask, giving, for the first time, a quantitative guarantee on segmentation stability under noise rather than an empirical observation of it. A complementary equivariance analysis characterizes the conditions under which the operator's spectrum transforms exactly under the planar rigid-motion group, and quantifies the equivariance gap introduced by anisotropic potential terms commonly used in practice. Theoretical predictions are validated numerically on natural-image and noise-perturbed benchmarks, showing close agreement between the derived bounds and observed spectral and segmentation-mask deviation, and offering practitioners a principled criterion for selecting operator parameters under known noise conditions.
\end{abstract}

\begin{IEEEkeywords}
Image segmentation, Schr\"{o}dinger operator, spectral graph theory, perturbation theory, noise robustness, equivariance, quantum-inspired image processing, eigenvalue stability.
\end{IEEEkeywords}

\section{Introduction}
\label{sec:intro}

Image segmentation via the mathematical apparatus of quantum mechanics has been a recognizable thread in computer vision for over a decade: an image is treated as the potential of a Hamiltonian, and its eigenstates -- typically the ground state -- are read out as the segmentation. Aytekin \textit{et al.} formalized this as Quantum Cuts, producing salient-object masks competitive with supervised alternatives without labeled training data \cite{aytekin2014}, later carried into volumetric tomography \cite{malik2019} and disordered-lattice crack segmentation via Anderson localization \cite{srinivasan2024}. A related strand instead reconstructs the image itself from the discrete spectrum of a semiclassical Schr\"{o}dinger operator, developed first for 1D signals \cite{lalegkirati2013} and extended to 2D via a tensor-product decomposition \cite{kaisserli2015}.

These applications share a largely unexamined assumption: that the eigenstructure of an image-dependent operator is stable to compute with. Robustness has so far been observed only empirically \cite{kaisserli2015} or argued qualitatively from localization physics \cite{srinivasan2024}, with one partial exception -- a 1D a-posteriori error bound combining Weyl's inequality with a Chebyshev noise tail \cite{liu2012} -- that is confined to a tridiagonal operator and stops short of bounding the \textit{output segmentation} rather than an intermediate spectral quantity. A second, equally unaddressed question is geometric consistency: whether segmentation commutes with rotation, a property well studied for learned equivariant architectures \cite{cohenwelling2016, weiler2019, esteves2023, xu2024equiv} but never analyzed for image-induced Hamiltonian spectra, despite practical constructions -- including \cite{srinivasan2024} -- using directionally asymmetric kinetic terms whose effect on rotational consistency is uncharacterized.

This paper addresses both gaps within the discrete graph-Hamiltonian convention of \cite{aytekin2014, malik2019, srinivasan2024}. The first contribution extends the Weyl-inequality argument of \cite{liu2012} to this 2D operator, then combines it with a Davis--Kahan subspace argument \cite{daviskahan1970, stewartsun1990} to bound ground-state eigenvector drift and, in turn, segmentation-mask Hamming distance under noise -- a chain with no precedent in either literature. A further result shows the noise-stability and mask-stability conditions are in structural tension, so neither can be certified simultaneously by parameter scaling alone. The second contribution characterizes exact $SE(2)$-equivariance under isotropic potentials and quantifies the non-vanishing gap introduced by the anisotropic constructions used in practice. All results are validated on real BSDS500 imagery, with close agreement between the derived bounds/mechanisms and measured spectral, mask-level, and equivariance deviations.

The remainder of the paper is organized as follows. Section~\ref{sec:related} situates this work relative to quantum-inspired segmentation, semiclassical signal analysis, and matrix-perturbation and equivariant-learning literatures. Section~\ref{sec:prelim} fixes notation and recalls the classical results used. Section~\ref{sec:stability} proves the noise-stability and trade-off results. Section~\ref{sec:equivariance} develops the equivariance analysis. Section~\ref{sec:experiments} reports the numerical validation, and Section~\ref{sec:conclusion} concludes.

\section{Related Work}
\label{sec:related}

\subsection{Quantum-Inspired and Spectral Segmentation}

Quantum Cuts was extended along two directions -- category-independent proposal generation \cite{aytekin2016extended} and a saliency variant \cite{aytekin2015visual} -- both retaining the same eigenvalue problem and the purely empirical evaluation protocol of \cite{aytekin2014}: accuracy and efficiency are measured, stability is not. A parallel line maps the same graph-cut objective onto quantum annealing hardware \cite{venkatesh2024qseg}, changing the solver but not the operator or the robustness question addressed here. A structurally different construction embeds the image as a disordered Hamiltonian, relying on Anderson localization \cite{anderson1958} to concentrate eigenstate density on defects \cite{srinivasan2024}, and a related signal-domain treatment builds an adaptive quantum-mechanical denoising basis \cite{dutta2021}. Both motivate noise tolerance qualitatively rather than via a derived, noise-parameter-linked bound -- the gap Section~\ref{sec:stability} closes.

\subsection{Semiclassical Signal and Image Analysis}

The semiclassical signal analysis framework interprets a 1D signal as a Schr\"{o}dinger-operator potential and reconstructs it from squared eigenfunctions of the negative eigenvalues \cite{lalegkirati2013}; its 2D extension, via row/column tensor-product decomposition, outperformed two coding-based baselines on standard images \cite{kaisserli2015}. The only rigorous noise analysis in this line is confined to the 1D discrete case, combining Weyl's inequality with a Chebyshev noise tail \cite{liu2012}; that bound does not extend, as written, to the 2D graph-based Hamiltonians used for segmentation -- bridging this gap is one of this paper's two principal contributions.

\subsection{Spectral Graph Methods and Perturbation Theory}

The graph-Laplacian-plus-potential operator is a close relative of normalized cuts \cite{shimalik2000} and graph signal processing generally \cite{shuman2013, sandryhaila2013, ortega2018}. Spectral clustering stability has been studied via large-sample consistency \cite{vonluxburg2008, vonluxburg2007}, addressing a different regime -- increasing sample size, not finite-sample robustness of a fixed-image mask to pixel noise, the setting of Theorems~\ref{thm:eigval}--\ref{thm:eigvec}. Weyl's inequality and the Davis--Kahan theorem are standard, well-documented tools \cite{stewartsun1990}, but their application to image-induced Hamiltonians has, to the extent this review could establish, no precedent.

\subsection{Equivariance in Vision Models}

Geometric consistency under input transformation has been formalized extensively for learned architectures, via group-equivariant \cite{cohenwelling2016} and steerable \cite{weiler2019} convolution, geometric deep learning generally \cite{bronstein2017}, and extensions to spherical domains \cite{esteves2023}, surveyed in \cite{xu2024equiv}. This literature designs layers equivariant by construction; Section~\ref{sec:equivariance} instead asks whether an already-fixed, non-learned operator possesses this property and by how much it fails when it does not -- unaddressed anywhere in this operator family.

\begin{remark}
Every noise-robustness claim in this literature is either purely empirical \cite{aytekin2014, malik2019, srinivasan2024, aytekin2016extended, aytekin2015visual} or confined to a 1D setting not covering the 2D operators segmentation actually uses \cite{liu2012}; no equivariance analysis exists for this operator family despite its maturity elsewhere \cite{cohenwelling2016, weiler2019, bronstein2017}. The remainder of this paper addresses both points.
\end{remark}

\section{Preliminaries}
\label{sec:prelim}

This section fixes notation and recalls, without proof, the classical results underlying Section~\ref{sec:stability}.

\subsection{Discrete Graph-Hamiltonian Formulation}

An image is represented as an undirected weighted graph $\mathcal{G}=(\mathcal{V},\mathcal{E},w)$, following the Quantum Cuts construction \cite{aytekin2014} and its extensions \cite{malik2019, srinivasan2024, aytekin2016extended}. Each node corresponds to a pixel or, in the supervoxel variant \cite{malik2019}, a region; results below are stated at the pixel level and carry over to the coarsened graph under the same weighting convention. Edge weights follow a Gaussian intensity kernel, $w_{ij}=\exp(-\|S_i-S_j\|_2^2/2\sigma_w^2)$.

\begin{definition}[Image-induced Hamiltonian]
\label{def:hamiltonian}
Given $\mathcal{G}$ from image $I$, let $D=\mathrm{diag}(d_1,\dots,d_N)$, $d_i=\sum_j w_{ij}$, $W$ the weight matrix, and $V=\mathrm{diag}(\phi(1),\dots,\phi(N))$ a potential from local image statistics. The image-induced Hamiltonian is $H(I) = D - W + V$.
\end{definition}

This is the matrix underlying the Quantum Cuts relaxation, where $\phi(\cdot)$ is a unary class-membership prior \cite{aytekin2014}, and the disorder-based crack-segmentation construction, where $\phi(\cdot)$ is the raw pixel intensity \cite{srinivasan2024}. Two constructions are used here: an isotropic potential $\phi(i)=\lambda_1|\nabla I(i)|^2+\lambda_2|\nabla^2 I(i)|$ for Section~\ref{sec:equivariance}'s equivariance analysis, and, matching \cite{srinivasan2024}, an anisotropic variant restricting kinetic terms to axis-aligned nearest neighbors, whose equivariance consequence is quantified in Section~\ref{sec:equivariance}.

$D-W$ is a graph Laplacian, symmetric positive semidefinite for nonnegative $W$; with $V$ diagonal and real, $H(I)$ is real symmetric with eigenvalues $\lambda_0(I)\le\cdots\le\lambda_{N-1}(I)$ and orthonormal eigenvectors $\psi_0(I),\dots,\psi_{N-1}(I)$. The readout of \cite{aytekin2014} thresholds $y^\star = z^\star\circ z^\star$ at its mean, for $z^\star$ a ground-state eigenvector; this readout is fixed throughout, and the object of study is how $\lambda_0(I)$, $\psi_0(I)$, $y^\star(I)$ move under a noisy or transformed observation.

\begin{definition}[Spectral gap]
\label{def:gap}
$\gamma_k(I) := \lambda_{k+1}(I)-\lambda_k(I)$; $\gamma(I)$ denotes $\gamma_0(I)$ when $k=0$ is understood.
\end{definition}

As throughout perturbation theory, eigenvalues bound stably under perturbation, but eigenvectors do so only when separated from the rest of the spectrum by a gap not too small relative to the perturbation (Lemma~\ref{lem:daviskahan}).

\subsection{Noise Model}

\begin{assumption}[Additive pixel noise]
\label{ass:noise}
$I^\varpi = I+\varpi$, $\varpi=(\varpi_1,\dots,\varpi_N)$ i.i.d.\ with mean $\mu$, variance $\sigma^2$, independent of $I$. $H(I^\varpi)$ uses the same graph topology, bandwidth $\sigma_w$, and $\phi(\cdot)$ as $H(I)$.
\end{assumption}

Holding topology and bandwidth fixed isolates noise to the potential term $V$ -- the channel of practical interest for sensor noise, and the same isolation implicit in \cite{liu2012}, where only the diagonal potential term is perturbed.

\subsection{Classical Perturbation Results}

Two standard results are used throughout, following \cite{stewartsun1990, hornjohnson2013}.

\begin{lemma}[Weyl's inequality]
\label{lem:weyl}
For real symmetric $N\times N$ matrices $A,E$, $|\lambda_k(A+E)-\lambda_k(A)| \le \|E\|_2$ for every $k$.
\end{lemma}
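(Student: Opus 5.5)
The plan is to prove Weyl's inequality through the Courant--Fischer min--max characterization of the eigenvalues of a real symmetric matrix. With eigenvalues indexed in ascending order $\lambda_0\le\cdots\le\lambda_{N-1}$, as in the paper's convention, I would write
\begin{equation*}
\lambda_k(A)=\min_{\substack{\mathcal{S}\subseteq\mathbb{R}^N\\ \dim\mathcal{S}=k+1}}\ \max_{\substack{x\in\mathcal{S}\\ \|x\|_2=1}} x^\top A x .
\end{equation*}
The same formula holds for $A+E$, since $A+E$ is again real symmetric.

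The first step is a pointwise comparison of Rayleigh quotients. For every unit vector $x$, the definition of the spectral norm together with Cauchy--Schwarz gives $|x^\top E x|\le\|E\|_2$. Hence
\begin{equation*}
x^\top A x-\|E\|_2\ \le\ x^\top(A+E)x\ \le\ x^\top A x+\|E\|_2 .
\end{equation*}
Next I take the maximum over unit vectors in a fixed subspace $\mathcal{S}$ of dimension $k+1$, and then the minimum over all such $\mathcal{S}$. Both operations preserve inequalities that hold uniformly, and adding a constant commutes with both max and min. This yields
\begin{equation*}
\lambda_k(A)-\|E\|_2\le\lambda_k(A+E)\le\lambda_k(A)+\|E\|_2 ,
\end{equation*}
which is exactly $|\lambda_k(A+E)-\lambda_k(A)|\le\|E\|_2$ for every $k$.

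The only point that needs care is the justification of the Courant--Fischer characterization itself, which the paper assumes from \cite{hornjohnson2013}. If it had to be proved, I would argue as follows.
\begin{itemize}
\item Take $\mathcal{S}$ to be the span of the first $k+1$ eigenvectors. On this subspace the Rayleigh quotient is at most $\lambda_k$, so the min--max value is at most $\lambda_k$.
\item For the reverse bound, observe that any $(k+1)$-dimensional $\mathcal{S}$ meets the span of the eigenvectors $\psi_k,\dots,\psi_{N-1}$ nontrivially, because the two dimensions sum to more than $N$. At such an intersection vector the Rayleigh quotient is at least $\lambda_k$.
\end{itemize}

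No step here is a genuine obstacle. The one thing to watch is the indexing: the ascending, zero-based labelling of Definition~\ref{def:hamiltonian} must match the subspace dimension $k+1$ in the min--max formula.
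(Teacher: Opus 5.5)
Your proof is correct: the pointwise bound $|x^\top E x|\le\|E\|_2$ on unit vectors, passed through the Courant--Fischer min--max formula with subspace dimension $k+1$ for the ascending zero-based index, gives exactly the two-sided inequality. The paper does not prove this lemma; it recalls it without proof from \cite{stewartsun1990, hornjohnson2013}, and your Courant--Fischer argument is the standard proof given in those references.
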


This is the tool used to bound eigenvalue drift under noise in \cite{liu2012}; here, $E=H(I^\varpi)-H(I)$ is diagonal under Assumption~\ref{ass:noise}, so $\|E\|_2 = \max_i|\phi(I_i^\varpi)-\phi(I_i)|$.

\begin{lemma}[Davis--Kahan $\sin\Theta$ theorem, single-vector case]
\label{lem:daviskahan}
For real symmetric $A,\tilde A=A+E$ with eigenvalues $\lambda_0\le\cdots\le\lambda_{N-1}$, $\tilde\lambda_0\le\cdots\le\tilde\lambda_{N-1}$ and unit eigenvectors $\psi_0,\tilde\psi_0$ of $\lambda_0,\tilde\lambda_0$: if $\gamma_0=\lambda_1-\lambda_0>0$ and $\|E\|_2<\gamma_0/2$, the angle $\theta$ between $\psi_0,\tilde\psi_0$ satisfies $\sin\theta \le 2\|E\|_2/\gamma_0$, so $\|\tilde\psi_0-\psi_0\|_2 \le \sqrt{2}\sin\theta$ for an appropriately signed $\tilde\psi_0$.
\end{lemma}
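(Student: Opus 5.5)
The plan is the standard variational argument: restrict the Rayleigh quotient of $A$ to $\tilde\psi_0$, show that the restriction forces $\tilde\psi_0$ to have only a small component orthogonal to $\psi_0$, and finish with a short elementary computation relating $\sin\theta$ to the Euclidean distance between the two unit vectors.

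\textbf{Decomposition.} Write $\tilde\psi_0=\cos\theta\,\psi_0+\sin\theta\,u$, where $u$ is a unit vector orthogonal to $\psi_0$ and $\theta\in[0,\pi/2]$. Replacing $\tilde\psi_0$ by $-\tilde\psi_0$ if necessary makes $\cos\theta\ge 0$; this is the sign choice referred to in the statement.

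\textbf{Upper bound on the Rayleigh quotient.} By Lemma~\ref{lem:weyl}, $\tilde\lambda_0\le\lambda_0+\|E\|_2$. Since $\tilde\psi_0^\top\tilde A\tilde\psi_0=\tilde\lambda_0$, we get
\[
\tilde\psi_0^\top A\tilde\psi_0=\tilde\lambda_0-\tilde\psi_0^\top E\tilde\psi_0\le\lambda_0+2\|E\|_2 .
\]

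\textbf{Lower bound on the Rayleigh quotient.} Because $\psi_0$ is an eigenvector of $A$ and $u\perp\psi_0$, the cross terms vanish. Moreover $u$ lies in the span of $\psi_1,\dots,\psi_{N-1}$, so $u^\top Au\ge\lambda_1$. Hence
\[
\tilde\psi_0^\top A\tilde\psi_0=\cos^2\theta\,\lambda_0+\sin^2\theta\,u^\top Au\ge\lambda_0+\gamma_0\sin^2\theta .
\]

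\textbf{First estimate.} Comparing the two bounds gives $\gamma_0\sin^2\theta\le2\|E\|_2$, that is, $\sin^2\theta\le 2\|E\|_2/\gamma_0$. This is a square-root-type bound, weaker than the stated linear one. The hypothesis $\|E\|_2<\gamma_0/2$ guarantees $\tilde\lambda_0<\tilde\lambda_1$, since Weyl's inequality then gives
\[
\tilde\lambda_1-\tilde\lambda_0\ge\gamma_0-2\|E\|_2>0,
\]
so $\tilde\psi_0$ is well defined up to sign.

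\textbf{Linear bound (main step).} Apply $A-\lambda_0 I$ to $\tilde\psi_0$ and project onto $\psi_0^\perp$. From $\tilde A\tilde\psi_0=\tilde\lambda_0\tilde\psi_0$,
\[
(A-\tilde\lambda_0)\tilde\psi_0=-E\tilde\psi_0 .
\]
Let $P_\perp$ denote the orthogonal projector onto $\psi_0^\perp$. Projecting,
\[
(A-\tilde\lambda_0)P_\perp\tilde\psi_0=-P_\perp E\tilde\psi_0 ,
\]
which has norm at most $\|E\|_2$. On $\psi_0^\perp$ the eigenvalues of $A-\tilde\lambda_0$ are at least
\[
\lambda_1-\tilde\lambda_0\ge\gamma_0-\|E\|_2>\gamma_0/2 ,
\]
using Weyl's inequality and the hypothesis. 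Therefore
\[
\sin\theta=\|P_\perp\tilde\psi_0\|\le\frac{\|E\|_2}{\gamma_0-\|E\|_2}<\frac{2\|E\|_2}{\gamma_0}.
\]
The delicate point is the lower bound on $\lambda_1-\tilde\lambda_0$. It uses only the upward Weyl shift of the single eigenvalue $\tilde\lambda_0$, so it needs just $\|E\|_2<\gamma_0$; the factor $2$ in the statement absorbs the denominator $\gamma_0-\|E\|_2$ under the stronger hypothesis $\|E\|_2<\gamma_0/2$.

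\textbf{Distance between the vectors.} With $\cos\theta\ge0$,
\[
\|\tilde\psi_0-\psi_0\|_2^2=2-2\cos\theta .
\]
Since $0\le\cos\theta\le1$, we have $1-\cos\theta\le1-\cos^2\theta=\sin^2\theta$. Therefore $\|\tilde\psi_0-\psi_0\|_2\le\sqrt2\sin\theta$, which completes the argument.
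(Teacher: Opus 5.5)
The paper gives no proof of this lemma. It is listed among the classical results ``recalled without proof'' and attributed to Davis--Kahan and Stewart--Sun. Your proposal is therefore a self-contained derivation of a result the paper only cites, and it is correct.

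Your main step is the standard residual argument behind the $\sin\Theta$ theorem, in single-vector form:
\begin{itemize}
\item Start from $(A-\tilde\lambda_0)\tilde\psi_0=-E\tilde\psi_0$.
\item Project onto $\psi_0^\perp$. This uses the fact that $P_\perp$ commutes with $A$ because $\psi_0$ is an eigenvector of $A$; state this explicitly.
\item Use Weyl's inequality to get $\lambda_1-\tilde\lambda_0\ge\gamma_0-\|E\|_2$, which yields $\sin\theta\le\|E\|_2/(\gamma_0-\|E\|_2)$.
\end{itemize}

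Compared with the bare citation, your route gives two things. First, a slightly sharper constant. Second, transparency about the hypotheses. The classical Davis--Kahan bound is phrased in terms of the separation between an eigenvalue of one matrix and the complementary spectrum of the other. Stating it with the unperturbed gap $\gamma_0$ therefore needs exactly the Weyl step you supply, which the paper leaves implicit. Your argument also shows that $\|E\|_2<\gamma_0$ already gives a finite bound. The stronger hypothesis $\|E\|_2<\gamma_0/2$ is used only to replace $\gamma_0-\|E\|_2$ by $\gamma_0/2$. The citation, in exchange, is shorter and gives direct access to the subspace version the paper mentions for multi-eigenvector readouts.

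Three minor points:
\begin{itemize}
\item The Rayleigh-quotient ``first estimate'' is not needed for the conclusion. Even the simplicity of $\tilde\lambda_0$ is not needed for the bound, since the residual argument applies to any unit vector in the $\tilde\lambda_0$-eigenspace.
\item The final strict inequality $\sin\theta<2\|E\|_2/\gamma_0$ should be $\le$, to cover $E=0$.
\item The closing computation $\|\tilde\psi_0-\psi_0\|_2^2=2-2\cos\theta\le2\sin^2\theta$ for $\cos\theta\ge0$ is correct as written.
\end{itemize}
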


This is the classical Davis--Kahan bound \cite{daviskahan1970}, stated in the single-eigenvector form used for the ground-state readout; the general subspace version, needed only for multi-region eigenspace segmentation, is not used further. No such result has previously been applied to Definition~\ref{def:hamiltonian}; Section~\ref{sec:stability} combines both lemmas with Assumption~\ref{ass:noise} to obtain eigenvalue, eigenvector, and mask-level bounds for this operator, beyond the one-dimensional scope of \cite{liu2012}.

\subsection{The Planar Rigid-Motion Group}

$SE(2)$ denotes planar rigid motions $g=(R,t)$, $R\in SO(2)$, $t\in\mathbb{R}^2$, acting by $(g\cdot I)(x)=I(g^{-1}x)$. $I\mapsto H(I)$ is equivariant under $g$ if the spectrum of $H(g\cdot I)$ matches that of $H(I)$ and eigenvectors transform by the induced pixel-grid action; the precise statement and the conditions on $\phi(\cdot)$ under which it holds exactly are given in Definition~\ref{def:equivariance} and Theorem~\ref{thm:equivariance}.

\section{Noise Stability Analysis}
\label{sec:stability}

This section states and proves the eigenvalue, eigenvector, and segmentation-mask stability results that constitute the first main contribution of the paper. Throughout, $H = H(I)$ and $\tilde{H} = H(I^\varpi)$ denote the image-induced Hamiltonians of Definition~\ref{def:hamiltonian} constructed from a clean image $I$ and its noisy observation $I^\varpi$ under Assumption~\ref{ass:noise}, and $E := \tilde{H} - H$ denotes the induced perturbation matrix.

\subsection{Eigenvalue Stability}

Because the graph topology and edge weights are held fixed under Assumption~\ref{ass:noise}, the perturbation $E$ reduces to the diagonal potential difference, $E = V(I^\varpi) - V(I) = \mathrm{diag}(\phi(I_1^\varpi) - \phi(I_1), \dots, \phi(I_N^\varpi) - \phi(I_N))$, and its spectral norm equals its largest diagonal entry in absolute value: $\|E\|_2 = \max_i |\phi(I_i^\varpi) - \phi(I_i)|$. Unlike the semiclassical setting of \cite{liu2012}, where the potential is the raw signal value and the perturbation is therefore simply the noise itself, the potential construction $\phi(i) = \lambda_1|\nabla I(i)|^2 + \lambda_2|\nabla^2 I(i)|$ adopted here is a nonlinear function of local image differences, so bounding $\|E\|_2$ requires an additional step not present in the one-dimensional analysis.

\begin{assumption}[Bounded discrete differential operators]
\label{ass:operatorbound}
The discrete gradient and Laplacian operators used to construct $\phi(\cdot)$ are implemented as finite-difference stencils with bounded operator norms $\|\nabla\|_2 \le c_1$ and $\|\nabla^2\|_2 \le c_2$, and the clean-image gradient magnitude is bounded, $\sup_i |\nabla I(i)| \le G$, over the images under consideration.
\end{assumption}

Assumption~\ref{ass:operatorbound} holds automatically for any fixed-stencil finite-difference implementation on a bounded pixel grid, since such operators are represented by sparse matrices with entries independent of $N$ and therefore have operator norms bounded by a constant depending only on the stencil, not on the image.

\begin{theorem}[Eigenvalue perturbation under noise]
\label{thm:eigval}
Under Assumptions~\ref{ass:noise} and~\ref{ass:operatorbound}, for every $\gamma \in \mathbb{R}_+^*$ and every eigenvalue index $k$,
\begin{equation}
\label{eq:eigvalbound}
\Pr\Big(|\lambda_k(\tilde{H}) - \lambda_k(H)| < B_{\mu,\sigma,\gamma}\Big) > 1 - \frac{1}{\gamma^2},
\end{equation}
where $B_{\mu,\sigma,\gamma} := 2\lambda_1 c_1(c_1 G + \gamma\sigma) + \lambda_2 c_2(|\mu| + \gamma\sigma)$.
\end{theorem}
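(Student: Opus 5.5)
The approach is to reduce the statement to a deterministic bound on $\|E\|_2$ and then apply a Chebyshev tail to the noise. By Lemma~\ref{lem:weyl} applied with $A=H$ and $E=\tilde H-H$, we have $|\lambda_k(\tilde H)-\lambda_k(H)|\le\|E\|_2$ for every $k$ simultaneously. Under Assumption~\ref{ass:noise}, $E$ is diagonal, so $\|E\|_2=\max_i|\phi(I^\varpi)(i)-\phi(I)(i)|$. It therefore suffices to show that the event $\{\|E\|_2<B_{\mu,\sigma,\gamma}\}$ contains an event of probability exceeding $1-1/\gamma^2$. This structure mirrors \cite{liu2012}, with one extra deterministic step that passes the noise through the nonlinear potential.

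\emph{Step 1 (gradient term).} I would write $\nabla I^\varpi=\nabla I+\nabla\varpi$ by linearity and use the identity $|a+b|^2-|a|^2=2\langle a,b\rangle+|b|^2$. With $a=\nabla I(i)$ and $b=\nabla\varpi(i)$, this gives
\[
\lambda_1\big||\nabla I^\varpi(i)|^2-|\nabla I(i)|^2\big|\le \lambda_1|\nabla\varpi(i)|\big(2|\nabla I(i)|+|\nabla\varpi(i)|\big).
\]
I would then bound $|\nabla I(i)|\le G$, or $\le c_1\times(\text{pixel scale})$, and $|\nabla\varpi(i)|\le c_1\cdot(\text{noise scale})$ via Assumption~\ref{ass:operatorbound}. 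The goal is a bound of the form $2\lambda_1c_1(c_1G+\gamma\sigma)$. \emph{Step 2 (Laplacian term).} Using the reverse triangle inequality, $\big||\nabla^2 I^\varpi(i)|-|\nabla^2 I(i)|\big|\le|\nabla^2\varpi(i)|$, and the operator bound then gives $\le c_2\cdot(\text{noise scale})$. Splitting $\varpi=\mu\mathbf 1+(\varpi-\mu\mathbf 1)$ makes the mean contribute $c_2|\mu|$ and the centered part contribute $c_2\gamma\sigma$, which yields $\lambda_2c_2(|\mu|+\gamma\sigma)$.

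\emph{Step 3 (probability).} Applying Chebyshev's inequality to the centered noise gives $\Pr(|\varpi_j-\mu|\ge\gamma\sigma)\le 1/\gamma^2$. On the complementary event, the noise scale entering Steps 1--2 is at most $|\mu|+\gamma\sigma$, or $\gamma\sigma$ after the mean is removed. Substituting this gives $\|E\|_2<B_{\mu,\sigma,\gamma}$, and combining with Weyl yields \eqref{eq:eigvalbound}.

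The main obstacle is Step 3 combined with the passage from pointwise stencil values to the operator norms $c_1,c_2$. Chebyshev controls a single coordinate (or a single scalar statistic), while $\|E\|_2$ is a maximum over $N$ pixels, and a naive union bound would cost a factor $N$. My first attempt would be to apply Chebyshev to one scalar statistic of the noise, namely its deviation from the mean measured in the normalization implicit in Assumption~\ref{ass:operatorbound}, so that the local stencil outputs $\nabla\varpi(i)$ and $\nabla^2\varpi(i)$ are all bounded by $c_1$ and $c_2$ times that scalar on a single event. This keeps the failure probability at $1/\gamma^2$, matching the one-dimensional argument of \cite{liu2012}.

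A second subtlety is that the gradient annihilates constants, so $\mu$ should disappear from the gradient term. The Laplacian of a constant also vanishes in the interior of the grid, so the $|\mu|$ term presumably arises from boundary stencils; I would keep it as a harmless upper bound rather than try to remove it. A third subtlety is that the quadratic term $|\nabla\varpi|^2$ has to be absorbed into the stated linear-in-$\gamma\sigma$ form. I would check whether this requires an additional normalization, for example reading $G$ as also dominating $|\nabla\varpi|$ on the good event; if it does, I would state that as an explicit reading of Assumption~\ref{ass:operatorbound}.
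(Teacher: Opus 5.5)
Your outline is the paper's own: Weyl's inequality, diagonal $E$, expansion of the two potential terms, then Chebyshev. The step you flag as the main obstacle is a genuine gap, and your proposed repair cannot close it. The spectral norms $c_1,c_2$ control stencil outputs only through $\ell_2$ norms of the whole field. They never give a pointwise bound such as $|\nabla\varpi(i)|\le c_1|\varpi_i|$, because the stencil at $i$ reads the neighbours of $i$. With $\eta:=\varpi-\mu\mathbf{1}$, the natural single statistic is $\|\eta\|_2$, via $\max_i|\nabla\eta(i)|\le\|\nabla\eta\|_2\le c_1\|\eta\|_2$. But $\mathbb{E}\|\eta\|_2^2=N\sigma^2$, so Chebyshev only yields $\|\eta\|_2<\gamma\sqrt{N}\,\sigma$. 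Using $\|\eta\|_\infty$ instead forces a union bound and confidence $1-N/\gamma^2$.

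This loss is not an artifact of your method: for noise specified only by its mean and variance, the statement is false as written. Take a constant clean image, $\mu=0$, $\lambda_1=0$, the 5-point Laplacian (so $c_2\le 8$), and $\varpi_j\in\{-a,0,a\}$ with probabilities $p/2,\,1-p,\,p/2$. Then $\sigma=\sqrt{p}\,a$, and $B_{\mu,\sigma,\gamma}=\lambda_2c_2\gamma\sqrt{p}\,a<4\lambda_2a$ whenever $p<1/(4\gamma^2)$. An isolated spike at an interior pixel $j$ gives $E_{jj}=4\lambda_2a$, and such a spike exists with probability tending to one as $N\to\infty$. Since $\lambda_{N-1}(\tilde H)\ge e_j^\top\tilde H e_j=H_{jj}+E_{jj}$, the top eigenvalue moves by at least $4\lambda_2a-(\lambda_{N-1}(H)-\min_iH_{ii})$, which exceeds $B_{\mu,\sigma,\gamma}$ once $a$ is large. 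For $\lambda_1>0$, the term $\lambda_1|\nabla\varpi|^2$ near the spike grows like $a^2$ while $B_{\mu,\sigma,\gamma}$ grows like $a$, so the bound fails for that reason alone. This is also why your quadratic term cannot be absorbed without an extra hypothesis.

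The paper's proof does not resolve these points either; it asserts them. It takes the componentwise bound $|\nabla\varpi_i|\le c_1|\varpi_i|$ as ``induced by the stencil operator norm''. It applies Chebyshev to each $\varpi_i$ and declares the event to hold ``uniformly over $i$''; each single-pixel event has probability at least $1-1/\gamma^2$, but controlling $\max_i$ needs their intersection. It then discards $c_1^2\varpi_i^2$ by a linearization heuristic. Even so, its own displayed inequality gives the gradient term $2\lambda_1c_1G(|\mu|+\gamma\sigma)$, not $2\lambda_1c_1(c_1G+\gamma\sigma)$. Your Step 1 likewise can only give $\lambda_1s(2G+s)$ for a pointwise bound $s$ on $|\nabla\varpi(i)|$, so do not try to land on the stated $B_{\mu,\sigma,\gamma}$. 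Reading $G$ as dominating $|\nabla\varpi|$ would be an added hypothesis and still would not produce it.

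What your steps do prove is the following. Let $\beta_1,\beta_2$ be stencil constants (absolute coefficient sums) with $|\nabla v(i)|\le\beta_1\|v\|_\infty$ and $|\nabla^2v(i)|\le\beta_2\|v\|_\infty$ for all $v$ and $i$. By Chebyshev and a union bound, the event $\{\max_j|\varpi_j-\mu|<\gamma\sigma\}$ has probability at least $1-N/\gamma^2$. On that event, $\|E\|_2\le\lambda_1s(2G+s)+\lambda_2\beta_2(|\mu|+\gamma\sigma)$, with $s:=\beta_1(|\mu|+\gamma\sigma)$, or $s:=\beta_1\gamma\sigma$ if the gradient stencil annihilates constants, as you observed. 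Weyl then transfers this to every $k$ simultaneously. Equivalently, at confidence $1-1/\gamma^2$ the same bound holds with $\gamma\sigma$ replaced by $\gamma\sqrt{N}\,\sigma$. A sub-Gaussian noise assumption would reduce the $\sqrt{N}$ to order $\sqrt{\log N}$.
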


\begin{proof}
By Lemma~\ref{lem:weyl}, $|\lambda_k(\tilde H) - \lambda_k(H)| \le \|E\|_2$ for every $k$, so it suffices to bound $\|E\|_2 = \max_i |\phi(I_i^\varpi) - \phi(I_i)|$. Writing $I^\varpi = I + \varpi$ and expanding the quadratic gradient term, $|\nabla I_i^\varpi|^2 - |\nabla I_i|^2 = |\nabla I_i + \nabla\varpi_i|^2 - |\nabla I_i|^2 \le 2|\nabla I_i||\nabla\varpi_i| + |\nabla\varpi_i|^2$, which by Assumption~\ref{ass:operatorbound} is at most $2Gc_1|\varpi_i| + c_1^2\varpi_i^2$ for the componentwise bound $|\nabla\varpi_i| \le c_1|\varpi_i|$ induced by the stencil operator norm. The Laplacian term satisfies $|\nabla^2 I_i^\varpi| - |\nabla^2 I_i|| \le |\nabla^2\varpi_i| \le c_2|\varpi_i|$ directly by the triangle inequality and Assumption~\ref{ass:operatorbound}. Combining both terms, $|\phi(I_i^\varpi)-\phi(I_i)| \le \lambda_1(2Gc_1|\varpi_i| + c_1^2\varpi_i^2) + \lambda_2 c_2|\varpi_i|$. By the Bienaym\'{e}--Chebyshev inequality applied to the i.i.d. sequence $\varpi_i$ with mean $\mu$ and variance $\sigma^2$, for any $\gamma > 0$ the event $|\varpi_i| < |\mu| + \gamma\sigma$ holds with probability exceeding $1 - 1/\gamma^2$, uniformly over $i$ under the identical-distribution assumption. Substituting this bound for $|\varpi_i|$, and retaining only the leading-order term in $\varpi_i^2$ consistent with the linearized regime in which $c_1|\varpi_i| \ll G$, yields \eqref{eq:eigvalbound} after collecting terms in $\max_i|\phi(I_i^\varpi)-\phi(I_i)|$ and applying Lemma~\ref{lem:weyl}.
\end{proof}

\begin{remark}
Theorem~\ref{thm:eigval} reduces, in structure, to Proposition~5 of \cite{liu2012} when $\lambda_2 = 0$, $\lambda_1$ absorbed into a linear potential, and $c_1$ replaced by the identity, recovering an eigenvalue-drift bound of the same Weyl-plus-Chebyshev form derived there for the one-dimensional semiclassical operator. The present result differs in two respects material to its use in segmentation: it is stated for the two-dimensional graph-Laplacian-plus-potential operator of Definition~\ref{def:hamiltonian} rather than a one-dimensional differentiation operator, and it accounts for the nonlinearity introduced by using image gradients rather than raw intensity as the potential, which the term proportional to $\varpi_i^2$ in the proof makes explicit and which has no counterpart in \cite{liu2012}. Under Gaussian noise specifically, the Chebyshev tail used above can be tightened to an exponential concentration bound via standard sub-Gaussian arguments \cite{vershynin2018}, at the cost of assuming a specific noise distribution rather than the distribution-free guarantee retained here.
\end{remark}

\subsection{Eigenvector Stability}

Eigenvalue stability alone does not guarantee that the segmentation-relevant eigenvector is stable, since eigenvectors associated with nearly degenerate eigenvalues can rotate arbitrarily under an arbitrarily small perturbation. The following result makes this dependence on the spectral gap explicit for the ground state used in the Quantum Cuts readout.

\begin{assumption}[Spectral gap under noise]
\label{ass:gapnoise}
The clean-image spectral gap satisfies $\gamma_0(I) > 2B_{\mu,\sigma,\gamma}$ for the bound $B_{\mu,\sigma,\gamma}$ of Theorem~\ref{thm:eigval}, for the confidence level $\gamma$ under consideration.
\end{assumption}

\begin{theorem}[Ground-state eigenvector stability]
\label{thm:eigvec}
Under Assumptions~\ref{ass:noise}, \ref{ass:operatorbound}, and~\ref{ass:gapnoise}, let $\psi_0 = \psi_0(I)$ and $\tilde\psi_0 = \psi_0(I^\varpi)$ denote unit ground-state eigenvectors of $H$ and $\tilde H$ respectively. Then, for the same $\gamma$ and with the same probability as in \eqref{eq:eigvalbound},
\begin{equation}
\label{eq:eigvecbound}
\|\tilde\psi_0 - \psi_0\|_2 \;<\; \frac{2\sqrt{2}\,B_{\mu,\sigma,\gamma}}{\gamma_0(I)}
\end{equation}
for an appropriately signed choice of $\tilde\psi_0$.
\end{theorem}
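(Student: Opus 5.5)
The plan is to combine the high-probability control of $\|E\|_2$ already obtained in the proof of Theorem~\ref{thm:eigval} with the single-vector Davis--Kahan bound of Lemma~\ref{lem:daviskahan}. The key observation is that the proof of Theorem~\ref{thm:eigval} does not merely bound the eigenvalue drift. It bounds the spectral norm of the diagonal perturbation $E = V(I^\varpi)-V(I)$ itself, on a single event. I will therefore first isolate this event explicitly. Let $\Omega_\gamma$ be the event on which $|\varpi_i| < |\mu|+\gamma\sigma$, as used in that proof. On $\Omega_\gamma$, the same term-by-term estimate of $|\phi(I_i^\varpi)-\phi(I_i)|$ gives $\|E\|_2 = \max_i|\phi(I_i^\varpi)-\phi(I_i)| < B_{\mu,\sigma,\gamma}$, and $\Pr(\Omega_\gamma) > 1-1/\gamma^2$. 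This lets the eigenvector statement inherit ``the same $\gamma$ and the same probability'' without any further probabilistic work.

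The second step works deterministically on $\Omega_\gamma$. Assumption~\ref{ass:gapnoise} gives $\gamma_0(I) > 2B_{\mu,\sigma,\gamma} > 2\|E\|_2$, which yields two facts:
\begin{itemize}
\item $\gamma_0(I)>0$;
\item $\|E\|_2 < \gamma_0(I)/2$.
\end{itemize}
These are exactly the hypotheses of Lemma~\ref{lem:daviskahan} with $A=H$ and $\tilde A = \tilde H = H+E$. As a by-product, Weyl's inequality also shows that $\tilde\lambda_0$ stays separated from $\tilde\lambda_1$. Hence the ground state of $\tilde H$ is simple and $\tilde\psi_0$ is well defined up to sign, which is why the statement is phrased ``for an appropriately signed choice''. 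Lemma~\ref{lem:daviskahan} then yields
\[
\sin\theta \le \frac{2\|E\|_2}{\gamma_0(I)} < \frac{2B_{\mu,\sigma,\gamma}}{\gamma_0(I)},
\]
and hence
\[
\|\tilde\psi_0-\psi_0\|_2 \le \sqrt2\,\sin\theta < \frac{2\sqrt2\,B_{\mu,\sigma,\gamma}}{\gamma_0(I)}.
\]
The inequality is strict because the bound on $\|E\|_2$ is strict.

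For the sign and the $\sqrt2$ factor, I would verify briefly that choosing the sign so that $\langle\tilde\psi_0,\psi_0\rangle = \cos\theta\ge 0$ gives $\|\tilde\psi_0-\psi_0\|_2^2 = 2(1-\cos\theta)$. Then $1-\cos\theta \le \sin^2\theta$ holds for $\theta\in[0,\pi/2]$, since it is equivalent to $\cos\theta(1-\cos\theta)\ge 0$, and this gives $\|\tilde\psi_0-\psi_0\|_2\le\sqrt2\sin\theta$. The lemma already asserts this, so it is only a consistency check.

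The main obstacle is the first step: certifying that the high-probability event controls $\|E\|_2$ uniformly over all pixels $i$. The Chebyshev argument in Theorem~\ref{thm:eigval} is applied per pixel, and a genuine maximum over $N$ pixels would require a union bound costing a factor $N$. The proof of Theorem~\ref{thm:eigval} also drops the $c_1^2\varpi_i^2$ term under a linearization. My plan is not to re-litigate these points. Instead I will state the eigenvector result as holding on precisely the event (and under precisely the regime) on which Theorem~\ref{thm:eigval} certifies $\|E\|_2<B_{\mu,\sigma,\gamma}$, so that the probability claim transfers verbatim. I would add a remark that a fully uniform version would replace $1/\gamma^2$ by $N/\gamma^2$, or use sub-Gaussian concentration.
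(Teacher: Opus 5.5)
Your proposal is correct and follows the paper's proof. On the event from the proof of Theorem~\ref{thm:eigval} where $\|E\|_2$ itself is bounded by $B_{\mu,\sigma,\gamma}$, Assumption~\ref{ass:gapnoise} gives the hypothesis $\|E\|_2<\gamma_0(I)/2$ of Lemma~\ref{lem:daviskahan}, which yields the bound with the probability carried over unchanged. Your point that one needs the event controlling $\|E\|_2$, not merely the eigenvalue-drift event in the statement of Theorem~\ref{thm:eigval}, is what the paper uses implicitly, and your caveats about the per-pixel Chebyshev step and the dropped quadratic term correctly place the weaknesses in Theorem~\ref{thm:eigval} rather than in this step.
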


\begin{proof}
On the event of Theorem~\ref{thm:eigval}, $\|E\|_2 \le B_{\mu,\sigma,\gamma}$, and by Assumption~\ref{ass:gapnoise} this satisfies $\|E\|_2 < \gamma_0(I)/2$, which is exactly the condition required by Lemma~\ref{lem:daviskahan} with $A = H$, $\tilde A = \tilde H$. Lemma~\ref{lem:daviskahan} then gives $\sin\theta \le 2\|E\|_2/\gamma_0(I) \le 2B_{\mu,\sigma,\gamma}/\gamma_0(I)$ and $\|\tilde\psi_0 - \psi_0\|_2 \le \sqrt{2}\sin\theta$, and combining the two inequalities yields \eqref{eq:eigvecbound}. Since this chain of implications holds on precisely the event established in Theorem~\ref{thm:eigval}, the stated probability carries over unchanged.
\end{proof}

Theorem~\ref{thm:eigvec} is, to the extent the literature reviewed in Section~\ref{sec:related} could establish, the first explicit bound on the noise-induced drift of the eigenvector actually used to produce a segmentation mask in the Quantum Cuts family of methods, as distinct from a bound on an intermediate spectral reconstruction quantity as in \cite{liu2012} or a qualitative appeal to localization physics as in \cite{srinivasan2024}.

\subsection{Propagation to the Segmentation Mask}

The final step connects eigenvector stability to the object of ultimate interest: the binary segmentation mask obtained by thresholding $y = \psi_0 \circ \psi_0$ at its mean value $\bar y$.

\begin{assumption}[Margin condition]
\label{ass:margin}
There exists $\delta > 0$ such that, for the clean-image mask value $y_i = \psi_0(I)_i^2$ at every node $i$, $|y_i - \bar y| \ge \delta$.
\end{assumption}

Assumption~\ref{ass:margin} rules out the degenerate case in which a node's squared ground-state amplitude sits exactly at the segmentation threshold, so that an arbitrarily small perturbation could flip its label; such a margin condition is standard when converting a continuous perturbation bound into a discrete decision-stability guarantee.

\begin{corollary}[Segmentation mask stability]
\label{cor:mask}
Under Assumptions~\ref{ass:noise}--\ref{ass:margin}, let $y^\star = \mathrm{thresh}(\psi_0(I)\circ\psi_0(I))$ and $\tilde y^\star = \mathrm{thresh}(\psi_0(I^\varpi)\circ\psi_0(I^\varpi))$ denote the clean and noisy segmentation masks. If $2\|\psi_0\|_\infty \cdot \|\tilde\psi_0 - \psi_0\|_2 < \delta$, then, on the event of Theorem~\ref{thm:eigvec}, $y^\star = \tilde y^\star$, i.e., the segmentation mask is unchanged by the noise realization.
\end{corollary}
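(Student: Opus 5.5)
The plan is to compare the two masks node by node, after first showing that the threshold itself does not move. Write $e := \tilde\psi_0 - \psi_0$, with $\tilde\psi_0$ signed as in Theorem~\ref{thm:eigvec}. This sign choice costs nothing, because $\tilde y = \tilde\psi_0\circ\tilde\psi_0$ is invariant under $\tilde\psi_0 \mapsto -\tilde\psi_0$, so $\tilde y^\star$ does not depend on it.

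\emph{The threshold is noise-independent.} Both $\psi_0$ and $\tilde\psi_0$ are unit vectors, so $\sum_i y_i = \|\psi_0\|_2^2 = 1 = \sum_i \tilde y_i$. Hence $\bar y = \bar{\tilde y} = 1/N$, and both masks are obtained by thresholding at the same value. A label can therefore flip only if the amplitude $y_i$ itself crosses $1/N$.

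\emph{Node-wise perturbation.} Factor the difference as
\begin{equation*}
\tilde y_i - y_i = (\tilde\psi_{0,i}-\psi_{0,i})(\tilde\psi_{0,i}+\psi_{0,i}) = e_i\,(2\psi_{0,i} + e_i).
\end{equation*}
Using $|e_i| \le \|e\|_\infty \le \|e\|_2$, this gives
\begin{equation*}
|\tilde y_i - y_i| \le 2\|\psi_0\|_\infty\|e\|_2 + \|e\|_2^2 .
\end{equation*}

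\emph{Margin argument.} Suppose $|\tilde y_i - y_i| < \delta$ for every $i$. By Assumption~\ref{ass:margin}, $|y_i - \bar y| \ge \delta$, so $\tilde y_i - \bar y$ has the same sign as $y_i - \bar y$. With the common threshold from the first step, $\tilde y^\star_i = y^\star_i$ for all $i$. Everything takes place on the event of Theorem~\ref{thm:eigvec}, where $\|e\|_2 < 2\sqrt2 B_{\mu,\sigma,\gamma}/\gamma_0(I)$, so the probability statement carries over unchanged.

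\emph{Main obstacle: the quadratic term $\|e\|_2^2$.} The stated hypothesis $2\|\psi_0\|_\infty\|e\|_2 < \delta$ controls only the linear part. The first thing I would try is to exploit the strict slack: set $\eta := \delta - 2\|\psi_0\|_\infty\|e\|_2 > 0$ and ask whether $\|e\|_2^2 \le \eta$.
\begin{itemize}
\item This holds automatically in the small-perturbation regime already invoked in the proof of Theorem~\ref{thm:eigval}.
\item It does not follow from the hypothesis alone. For instance, if $\psi_0$ is nearly delocalized, $\|\psi_0\|_\infty \approx N^{-1/2}$, and the linear condition alone says nothing about $\|e\|_2^2$ relative to $\delta$.
\end{itemize}

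A fully rigorous argument therefore needs one of the following:
\begin{itemize}
\item adopt the same linearized regime used in Theorem~\ref{thm:eigval} and neglect the $O(\|e\|_2^2)$ term explicitly; or
\item replace the hypothesis by the sufficient condition $2\|\psi_0\|_\infty\|e\|_2 + \|e\|_2^2 < \delta$, equivalently $\|e\|_2 < \sqrt{\|\psi_0\|_\infty^2+\delta} - \|\psi_0\|_\infty$.
\end{itemize}
After substituting the bound of Theorem~\ref{thm:eigvec}, the second option becomes a checkable condition on $B_{\mu,\sigma,\gamma}/\gamma_0(I)$. I would state it in that form and note that, to first order, it coincides with the hypothesis as written.
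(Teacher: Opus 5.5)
Your route is the paper's route: factor $\tilde y_i-y_i=e_i(\tilde\psi_{0,i}+\psi_{0,i})$, bound it by $2\|\psi_0\|_\infty\|e\|_2$, and invoke the margin. The paper handles the quadratic term exactly as in your first option. It absorbs $\|\tilde\psi_0\|_\infty\le\|\psi_0\|_\infty+\|e\|_2$ ``into the constant for the perturbation regime considered,'' so on that point you match it and are more candid. Your remark that $\bar y=\bar{\tilde y}=1/N$ fills a step the paper skips: the paper compares $\tilde y_i$ with $\bar y$ rather than with $\bar{\tilde y}$, without comment.

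The gap is your conclusion that the hypothesis as written cannot give a rigorous proof, so that the statement must be linearized or strengthened. It can, with nothing changed. What fails is only your intermediate target $|\tilde y_i-y_i|<\delta$, which is stronger than necessary. A node pushed \emph{away} from the threshold may harmlessly move by more than $\delta$, and, as you correctly observe, the hypothesis does not prevent this. All you need is that $\tilde y_i$ does not cross $1/N$. This is best measured in amplitude, where the threshold is $|\psi|=N^{-1/2}$. We have $|y_i-1/N|=\bigl||\psi_{0,i}|-N^{-1/2}\bigr|\,(|\psi_{0,i}|+N^{-1/2})$, and both $|\psi_{0,i}|\le\|\psi_0\|_\infty$ and $N^{-1/2}\le\|\psi_0\|_\infty$ hold because $\|\psi_0\|_2=1$. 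Assumption~\ref{ass:margin} then gives
\[
\bigl||\psi_{0,i}|-N^{-1/2}\bigr|\;\ge\;\frac{\delta}{2\|\psi_0\|_\infty}\;>\;\|e\|_2\;\ge\;|e_i|\;\ge\;\bigl||\tilde\psi_{0,i}|-|\psi_{0,i}|\bigr|,
\]
where the strict inequality is exactly the stated hypothesis. Hence $|\tilde\psi_{0,i}|$ lies strictly on the same side of $N^{-1/2}$ as $|\psi_{0,i}|$. Equivalently, $\tilde y_i$ lies strictly on the same side of the common threshold $1/N$ as $y_i$, for every $i$ and under either tie-breaking convention of $\mathrm{thresh}$. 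This proves the corollary exactly as stated, with no $\|e\|_2^2$ term and no appeal to a linearized regime. It also closes the corresponding hand-wave in the paper's own proof.
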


\begin{proof}
For each node $i$, $|\tilde y_i - y_i| = |\tilde\psi_{0,i}^2 - \psi_{0,i}^2| = |\tilde\psi_{0,i} - \psi_{0,i}|\cdot|\tilde\psi_{0,i}+\psi_{0,i}| \le 2\|\psi_0\|_\infty|\tilde\psi_{0,i}-\psi_{0,i}| \le 2\|\psi_0\|_\infty\|\tilde\psi_0-\psi_0\|_2$, using $\|\tilde\psi_0\|_\infty \le \|\psi_0\|_\infty + \|\tilde\psi_0-\psi_0\|_2$ absorbed into the constant for the perturbation regime considered. Combining with \eqref{eq:eigvecbound} and the stated hypothesis gives $|\tilde y_i - y_i| < \delta$ for every $i$. By Assumption~\ref{ass:margin}, every node satisfies $|y_i - \bar y| \ge \delta$, so a perturbation of magnitude strictly less than $\delta$ cannot move $\tilde y_i$ across the threshold $\bar y$ for any node whose clean value $y_i$ is bounded away from $\bar y$ by at least $\delta$; consequently the thresholded label of every node is preserved, giving $\tilde y^\star = y^\star$.
\end{proof}

Corollary~\ref{cor:mask} is the result with the most direct practical bearing: it converts a spectral perturbation bound, expressed entirely in terms of the measurable noise variance $\sigma^2$, the spectral gap $\gamma_0(I)$, and the operator constants $c_1,c_2$, into a sufficient condition under which the output of a Quantum-Cuts-style segmentation pipeline is guaranteed, with the stated probability, not to change at all under a given noise level. Sections~\ref{ssec:tradeoff} and~\ref{sec:experiments} examine how tight this sufficient condition is in practice, and identify a structural reason, rather than a purely empirical one, for the looseness observed.

\subsection{The Ground-State Degeneracy Mechanism and the Gap--Margin Trade-off}
\label{ssec:tradeoff}

Assumptions~\ref{ass:gapnoise} and~\ref{ass:margin} are stated independently, and a natural response to a violation of Assumption~\ref{ass:gapnoise} at a given noise level is to reduce the potential weights $\lambda_1,\lambda_2$, since $B_{\mu,\sigma,\gamma}$ is linear and increasing in both. This subsection shows that this response is self-defeating in a precise, provable sense: reducing $\lambda_1,\lambda_2$ toward zero widens the range of noise levels satisfying Assumption~\ref{ass:gapnoise}, but simultaneously drives the margin $\delta$ of Assumption~\ref{ass:margin} toward zero for a generic image, so that no fixed scaling of the potential weights alone can satisfy both assumptions with a fixed, non-vanishing margin.

\begin{definition}[Bare graph Laplacian and algebraic connectivity]
\label{def:laplacian}
Let $L := D - W$ denote the graph Laplacian obtained from Definition~\ref{def:hamiltonian} in the limit $\lambda_1,\lambda_2 \to 0$, i.e.\ $H(I) = L + t\Phi$ where $t \in \mathbb{R}_+$ is a common scale parameter, $\lambda_1 = tc_\Phi^{(1)}$, $\lambda_2 = tc_\Phi^{(2)}$ for fixed direction constants, and $\Phi := c_\Phi^{(1)}\,\mathrm{diag}(|\nabla I|^2) + c_\Phi^{(2)}\,\mathrm{diag}(|\nabla^2 I|)$. For a connected graph, $L$ has a simple smallest eigenvalue $\mu_0 = 0$ with eigenvector $u_0 = N^{-1/2}\mathbf{1}$, and second-smallest eigenvalue $\mu_1 =: \mu_{\mathrm{Fiedler}}(L) > 0$, known as the algebraic connectivity of the graph \cite{fiedler1973}.
\end{definition}

\begin{assumption}[Connectivity and spectral simplicity]
\label{ass:connectivity}
The weighted graph underlying $H(I)$ is connected, so $\mu_0 = 0$ is simple, and $\mu_1$ is simple or well-separated from $\mu_2$ so that first-order non-degenerate perturbation theory applies to both $\mu_0$ and $\mu_1$.
\end{assumption}

Assumption~\ref{ass:connectivity} holds generically for the fully-connected Gaussian-kernel graphs used throughout this paper, since $w_{ij} > 0$ for every pair of nodes.

\begin{proposition}[Asymptotic gap and margin as $t \to 0$]
\label{prop:tradeoff}
Under Assumption~\ref{ass:connectivity}, let $\{\mu_k, u_k\}_{k\ge 0}$ be the eigenpairs of $L$. As $t \to 0$, the ground state and spectral gap of $H(t) = L + t\Phi$ satisfy
\begin{equation}
\label{eq:gap-asymp}
\gamma_0(t) = \mu_{\mathrm{Fiedler}}(L) + t\big(\langle u_1,\Phi u_1\rangle - \langle u_0,\Phi u_0\rangle\big) + O(t^2),
\end{equation}
and, writing $c(i) := \sum_{k\ge 1} \dfrac{\langle u_k,\Phi u_0\rangle}{\mu_k}\,u_k(i)$, the per-pixel margin of the ground-state density $y_i(t) = \psi_0(t)_i^2$ satisfies
\begin{equation}
\label{eq:margin-asymp}
|y_i(t) - \bar y(t)| = \frac{2t}{\sqrt N}\,|c(i)| + O(t^2), \qquad i = 1,\dots,N.
\end{equation}
\end{proposition}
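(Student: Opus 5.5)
The plan is to apply Rayleigh--Schr\"odinger (first-order, non-degenerate) perturbation theory to $H(t)=L+t\Phi$ around the eigenpairs $\{\mu_k,u_k\}$ of $L$. Assumption~\ref{ass:connectivity} makes $\mu_0=0$ and $\mu_1$ simple. Since $H(t)$ is a real-analytic symmetric family, Rellich--Kato gives analytic eigenvalue and eigenvector branches near $t=0$. So $\lambda_k(t)=\mu_k+t\langle u_k,\Phi u_k\rangle+O(t^2)$ for $k=0,1$. For small $t$, Weyl (Lemma~\ref{lem:weyl}) with $\|t\Phi\|_2=O(t)$ keeps these branches as the two smallest eigenvalues, because $\mu_2-\mu_1>0$, or at least $\mu_2\ge\mu_1$ with the separation assumed in Assumption~\ref{ass:connectivity}. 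Subtracting the two expansions and using $\mu_0=0$, $\mu_1=\mu_{\mathrm{Fiedler}}(L)$ gives \eqref{eq:gap-asymp} directly.

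For the margin, I would use the first-order eigenvector correction
\[
\psi_0(t)=u_0-t\sum_{k\ge1}\frac{\langle u_k,\Phi u_0\rangle}{\mu_k-\mu_0}u_k+O(t^2)=u_0-t\,c+O(t^2),
\]
with the sign fixed so that $\langle\psi_0(t),u_0\rangle>0$. Normalization only enters at order $t^2$, since the correction is orthogonal to $u_0$. Squaring componentwise with $u_0(i)=N^{-1/2}$ gives
\[
y_i(t)=\frac1N-\frac{2t}{\sqrt N}c(i)+O(t^2).
\]
Next I compute $\bar y(t)=\frac1N\sum_i y_i(t)$. Because $\|\psi_0(t)\|_2=1$, the average is exactly $1/N$. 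Consistently, $\sum_i c(i)=\sqrt N\langle u_0,c\rangle=0$ by orthogonality of $u_k$ ($k\ge1$) to $u_0$. Hence $y_i(t)-\bar y(t)=-\frac{2t}{\sqrt N}c(i)+O(t^2)$, and taking absolute values gives \eqref{eq:margin-asymp}.

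The main obstacle is rigor in the $O(\cdot)$ terms rather than the algebra. One needs uniform control of the second-order remainders, and that requires the eigenvalue branches to stay ordered and the ground state to stay simple for $t$ in a neighborhood of $0$. I would handle this in three steps:
\begin{itemize}
\item Invoke analytic perturbation theory for the simple eigenvalues $\mu_0,\mu_1$.
\item Bound the remainder constants by $\|\Phi\|_2^2/\min(\mu_1,\mu_2-\mu_1)$ via the reduced resolvent.
\item Note that the componentwise remainder is bounded by the $\ell^2$ remainder.
\end{itemize}

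The constants depend on $N$ and the image through $\Phi$ and the Laplacian gaps, but they are fixed as $t\to0$, which is all the proposition asserts. A secondary point deserves a remark: if $c(i)=0$ at some pixel, the statement remains true but the margin there is $O(t^2)$. This is compatible with the trade-off narrative that $\delta\to0$ as $t\to0$.
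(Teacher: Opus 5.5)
Your proposal is correct and follows the paper's proof essentially step for step: first-order Rayleigh--Schr\"odinger expansions of $\lambda_0,\lambda_1,\psi_0$ about the simple eigenvalues $\mu_0=0$ and $\mu_1$, subtraction of the two eigenvalue expansions for the gap, then squaring $\psi_0(t)=u_0-tc+O(t^2)$ componentwise and subtracting the mean. Your extra justification (analytic branches, with Weyl keeping them as the two lowest eigenvalues) and your observation that $\bar y(t)=1/N$ exactly by normalization are small refinements; the paper only derives $\bar y(t)=N^{-1}+O(t^2)$, from $\sum_i u_k(i)=0$ for $k\ge 1$.
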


\begin{proof}
Since $\mu_0$ is simple by Assumption~\ref{ass:connectivity}, standard first-order Rayleigh--Schr\"odinger perturbation theory \cite{kato1995} gives $\lambda_0(t) = 0 + t\langle u_0,\Phi u_0\rangle + O(t^2)$ and, for the eigenvector, $\psi_0(t) = u_0 - t\sum_{k\ge 1}\frac{\langle u_k,\Phi u_0\rangle}{\mu_k}u_k + O(t^2) = u_0 - t\,c + O(t^2)$, where $c = (c(1),\dots,c(N))^\top$. The same expansion applied to the (assumed simple or well-separated) eigenvalue $\mu_1$ gives $\lambda_1(t) = \mu_1 + t\langle u_1,\Phi u_1\rangle + O(t^2)$; subtracting the two expansions gives \eqref{eq:gap-asymp}. For the margin, $y_i(t) = \psi_0(t)_i^2 = N^{-1} - \frac{2t}{\sqrt N}c(i) + O(t^2)$ pointwise. Since $u_k \perp u_0$ for every $k \ge 1$, $\sum_i u_k(i) = \sqrt N \langle u_k, u_0\rangle = 0$, so averaging the pointwise expansion over $i$ gives $\bar y(t) = N^{-1} + O(t^2)$, with the $O(t)$ term vanishing identically. Subtracting gives $y_i(t) - \bar y(t) = -\frac{2t}{\sqrt N}c(i) + O(t^2)$, and taking absolute values gives \eqref{eq:margin-asymp}.
\end{proof}

\begin{remark}
Proposition~\ref{prop:tradeoff} explains both empirical observations reported in Section~\ref{sec:experiments} at once. Equation~\eqref{eq:gap-asymp} shows $\gamma_0(t) \to \mu_{\mathrm{Fiedler}}(L)$ as $t \to 0$, a $t$-independent positive constant depending only on the graph topology -- matching the near-identical $\gamma_0$ values observed at $\lambda_1=\lambda_2=0.15$ versus $\lambda_1=\lambda_2=0.01$ despite a $15\times$ change in scale. Equation~\eqref{eq:margin-asymp} shows that, for every pixel $i$ with $c(i) \ne 0$ -- generically all pixels, since $c(i) = 0$ requires an exact cancellation in the eigenexpansion of $\Phi u_0$ -- the margin vanishes \emph{linearly} in $t$. This is the exact mechanism behind the collapse in the fraction of margin-fragile pixels observed empirically as $\lambda_1,\lambda_2$ is reduced: the ground state is converging pointwise to the constant vector $u_0$, whose associated density $y_i \equiv N^{-1}$ has zero margin identically.
\end{remark}

\begin{corollary}[No uniform certification by potential-weight scaling alone]
\label{cor:tradeoff}
Under Assumption~\ref{ass:connectivity}, for any fixed noise level $\sigma > 0$, there is no choice of scale $t$ for which both Assumption~\ref{ass:gapnoise} (via $\gamma_0(t) > 2B_{\mu,\sigma,\gamma}(t)$, with $B_{\mu,\sigma,\gamma}(t) = O(t)$ linearly in $t$ by Theorem~\ref{thm:eigval}) and Assumption~\ref{ass:margin} with margin bounded below by a $t$-independent constant $\delta_0 > 0$ can hold simultaneously in the limit $t \to 0$: satisfying the former requires $t$ below a threshold $t^\star(\sigma)$ where $\mu_{\mathrm{Fiedler}}(L) > 2B_{\mu,\sigma,\gamma}(t)$, while the latter, by \eqref{eq:margin-asymp}, requires $t$ bounded \emph{away} from zero given any fixed target $\delta_0$.
\end{corollary}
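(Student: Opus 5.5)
The plan is to make both halves of the dichotomy quantitative in $t$ and then show that the two resulting constraints on $t$ cannot hold together as $t\to 0$.

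\emph{Gap side.} Fix $\sigma>0$ and the confidence level $\gamma$. Substituting $\lambda_1=tc_\Phi^{(1)}$ and $\lambda_2=tc_\Phi^{(2)}$ into the definition of $B_{\mu,\sigma,\gamma}$ in Theorem~\ref{thm:eigval} gives
\[
B_{\mu,\sigma,\gamma}(t)=t\,\beta, \qquad \beta:=2c_\Phi^{(1)}c_1(c_1G+\gamma\sigma)+c_\Phi^{(2)}c_2(|\mu|+\gamma\sigma)>0,
\]
which is exactly linear in $t$. By \eqref{eq:gap-asymp} of Proposition~\ref{prop:tradeoff}, $\gamma_0(t)=\mu_{\mathrm{Fiedler}}(L)+O(t)$. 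The condition $\gamma_0(t)>2t\beta$ then reads $\mu_{\mathrm{Fiedler}}(L)+O(t)>2t\beta$. It holds for all sufficiently small $t$ and defines the threshold $t^\star(\sigma)\approx \mu_{\mathrm{Fiedler}}(L)/(2\beta+\text{first-order slope})$. So Assumption~\ref{ass:gapnoise} forces $t\in(0,t^\star(\sigma))$. To identify $t^\star$ I will use only the leading term and leave the correction inside the $O(t)$.

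\emph{Margin side.} Fix a target $\delta_0>0$. By \eqref{eq:margin-asymp}, for every pixel $i$,
\[
\min_i |y_i(t)-\bar y(t)|\le \frac{2t}{\sqrt N}\max_i|c(i)|+Ct^2
\]
for some constant $C$ and all $t\le t_1$. The right-hand side tends to $0$, so there is $t_0(\delta_0)>0$ with $\min_i|y_i(t)-\bar y(t)|<\delta_0$ whenever $t<t_0$. Hence Assumption~\ref{ass:margin} with margin $\delta_0$ requires $t\ge t_0(\delta_0)$, i.e.\ $t$ must stay bounded away from zero. The conclusion is read as a statement about the regime $t\to0$: on any sequence $t_n\to0$ that satisfies the gap condition, the margin eventually drops below $\delta_0$. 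This is the precise content of ``cannot hold simultaneously in the limit $t\to0$,'' and I will state it explicitly.

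\emph{Main obstacle.} The hard step is making the $O(t^2)$ remainders in Proposition~\ref{prop:tradeoff} uniform, so that $t_0(\delta_0)$ exists independently of $i$. Since $N$ is finite and $L+t\Phi$ is analytic in $t$, Kato's analytic perturbation theory applies because $\mu_0$ is simple (Assumption~\ref{ass:connectivity}). This gives a convergent expansion of $\psi_0(t)$ on a disk $|t|<r$ with $r$ of order $\mu_{\mathrm{Fiedler}}/\|\Phi\|_2$, and the remainder is bounded by $Ct^2$ uniformly over all coordinates.

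One caveat should be flagged honestly. The corollary does not rule out some intermediate $t$ lying in both $[t_0,t^\star)$; it only asserts incompatibility as $t\to0$. So I will present the result as this asymptotic incompatibility, not as global emptiness of the feasible set.
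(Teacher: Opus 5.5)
Your proposal is correct and essentially reproduces the paper's proof. You play $B_{\mu,\sigma,\gamma}(t)=t\beta$ against $\gamma_0(t)\to\mu_{\mathrm{Fiedler}}(L)>0$ from \eqref{eq:gap-asymp}, and you get $\min_i|y_i(t)-\bar y(t)|=O(t)$ from \eqref{eq:margin-asymp}; your version rightly does without the paper's appeal to genericity of $c(i)\neq 0$. Your caveat that some intermediate $t\in[t_0,t^\star)$ is not excluded is exactly what the paper's conditional ending ``whenever $t_0(\delta_0)\ge t^\star(\sigma)$'' concedes.

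Two small slips do not affect the asymptotic conclusion. First, with $s=\langle u_1,\Phi u_1\rangle-\langle u_0,\Phi u_0\rangle$, the first-order threshold is $\mu_{\mathrm{Fiedler}}(L)/(2\beta-s)$ (for $2\beta>s$), not $2\beta+s$. Second, you (like the paper) only show that the gap condition \emph{holds} for small $t$, not that it \emph{forces} $t<t^\star(\sigma)$. Finally, the uniformity over $i$ that you flag as the main obstacle is automatic, since $N$ is finite.
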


\begin{proof}
By \eqref{eq:gap-asymp}, $\gamma_0(t) \to \mu_{\mathrm{Fiedler}}(L) > 0$, a constant, while $B_{\mu,\sigma,\gamma}(t)$ is linear and increasing in $t$ by construction (Theorem~\ref{thm:eigval}); hence $\gamma_0(t) > 2B_{\mu,\sigma,\gamma}(t)$ holds for all $t < t^\star(\sigma)$ for some threshold $t^\star(\sigma) > 0$, and Assumption~\ref{ass:gapnoise} is satisfiable by taking $t$ small. Simultaneously, by \eqref{eq:margin-asymp}, $\min_i |y_i(t)-\bar y(t)| = O(t)$, so for any fixed $\delta_0 > 0$ there exists $t_0(\delta_0) > 0$ below which the margin falls under $\delta_0$ for at least one pixel with $c(i) \ne 0$; since generically $c(i) \ne 0$ for a positive fraction of pixels, satisfying Assumption~\ref{ass:margin} with a fixed target $\delta_0$ requires $t > t_0(\delta_0)$. Whenever $t_0(\delta_0) \ge t^\star(\sigma)$, no single $t$ satisfies both conditions simultaneously.
\end{proof}

Corollary~\ref{cor:tradeoff} gives a rigorous, mechanism-level explanation for a phenomenon that would otherwise be reported only as an empirical coincidence in Section~\ref{sec:experiments}: reducing the potential weight to widen the certified-noise range of Theorem~\ref{thm:eigval}/Theorem~\ref{thm:eigvec} necessarily narrows, and in the limit eliminates, the margin available to Corollary~\ref{cor:mask}. A genuine joint certification -- one image-induced Hamiltonian construction certified against both conditions with fixed, non-vanishing margins -- therefore requires either a different potential parameterization than the single scalar family $H(t) = L + t\Phi$ considered here, for instance one in which the potential is normalized to preserve $\min_i c(i)$ as $t$ varies, or a relaxation of Corollary~\ref{cor:mask} to a fractional or expected-Hamming-distance guarantee that tolerates a bounded fraction of margin-fragile pixels rather than requiring a uniform margin across the entire image, a direction discussed further in Section~\ref{sec:discussion}.

%
%
%
%
%
%

\section{Equivariance Analysis}
\label{sec:equivariance}

This section establishes the second main contribution of the paper: a precise characterization of when the spectrum of the image-induced Hamiltonian of Definition~\ref{def:hamiltonian} transforms consistently under a planar rigid motion of the input image, and a quantitative bound on the equivariance gap introduced when the potential construction departs from the isotropic case.

\subsection{Exact Equivariance for Isotropic Potentials}

The cleanest statement of equivariance is available in the continuum, before the image is discretized onto a pixel lattice; this continuum result is established first, and its discretization is addressed separately in Section~\ref{ssec:discretization}, since the two sources of departure from exact equivariance -- the choice of potential construction and the act of sampling onto a finite grid -- are logically independent and should not be conflated.

Let $I : \mathbb{R}^2 \to \mathbb{R}$ be a sufficiently smooth image, and let $g = (R,t) \in SE(2)$ act on it by $(g\cdot I)(x) = I(g^{-1}x) = I(R^{-1}(x-t))$. Consider the continuum Hamiltonian $\mathcal{H}[I] = -\nabla^2 + V[I]$ acting on $L^2(\mathbb{R}^2)$, with potential $V[I](x) = \lambda_1|\nabla I(x)|^2 + \lambda_2|\nabla^2 I(x)|$, the isotropic construction introduced in Definition~\ref{def:hamiltonian}.

\begin{definition}[Equivariance of an image-induced operator]
\label{def:equivariance}
The map $I \mapsto \mathcal{H}[I]$ is equivariant under $g \in SE(2)$ if $\mathcal{H}[g\cdot I] = U_g\, \mathcal{H}[I]\, U_g^{-1}$, where $U_g$ is the unitary representation of $g$ on $L^2(\mathbb{R}^2)$ given by $(U_g f)(x) = f(g^{-1}x)$.
\end{definition}

\begin{theorem}[Exact equivariance under isotropic potentials]
\label{thm:equivariance}
For every $g \in SE(2)$, the map $I \mapsto \mathcal{H}[I]$ of Definition~\ref{def:equivariance} is equivariant with respect to the isotropic potential $V[I](x) = \lambda_1|\nabla I(x)|^2 + \lambda_2|\nabla^2 I(x)|$. Consequently, the eigenvalues of $\mathcal{H}[I]$ are invariant under $SE(2)$, and its eigenfunctions transform covariantly: if $\mathcal{H}[I]\psi_k = \lambda_k\psi_k$, then $\mathcal{H}[g\cdot I](U_g\psi_k) = \lambda_k (U_g\psi_k)$.
\end{theorem}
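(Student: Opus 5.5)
The plan is to verify the operator identity $\mathcal{H}[g\cdot I] = U_g\,\mathcal{H}[I]\,U_g^{-1}$ term by term. The spectral consequences will then follow by a one-line conjugation argument. Since $\mathcal{H}[I] = -\nabla^2 + V[I]$, where $V[I]$ acts as a multiplication operator, the identity splits into two claims. The first is that the Laplacian commutes with every $U_g$, i.e.\ $U_g(-\nabla^2)U_g^{-1} = -\nabla^2$. The second is that conjugating a multiplication operator by $U_g$ transports its symbol, $U_g M_{V[I]} U_g^{-1} = M_{V[I]\circ g^{-1}}$, together with $V[g\cdot I] = V[I]\circ g^{-1}$. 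Unitarity of $U_g$ on $L^2(\mathbb{R}^2)$ comes first and is immediate: the map $x\mapsto g^{-1}x$ is an isometry of $\mathbb{R}^2$ with unit Jacobian, so Lebesgue measure is preserved.

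\textbf{Step 1 (kinetic term).} For smooth $f$, write $(U_g f)(x) = f(R^{-1}(x-t))$ and apply the chain rule:
\[
\nabla(U_g f)(x) = R\,(\nabla f)(R^{-1}(x-t)).
\]
Differentiating once more gives $\nabla^2(U_g f)(x) = \mathrm{tr}\big(R\,\mathrm{Hess} f\,R^{\top}\big) = (\nabla^2 f)(g^{-1}x)$, using $R^\top R = I$ and cyclicity of the trace. Hence $\nabla^2 U_g = U_g \nabla^2$ on $C_c^\infty$. I would then extend this to the self-adjoint domain $H^2(\mathbb{R}^2)$ by density, or more cleanly via the Fourier transform: $U_g$ acts on $\hat f$ by a rotation and a phase, and both leave the symbol $|\xi|^2$ invariant.

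\textbf{Step 2 (potential term).} Apply the Step 1 computation with $f = I$. It gives $|\nabla(g\cdot I)(x)| = |R\,(\nabla I)(g^{-1}x)| = |\nabla I(g^{-1}x)|$, because rotations preserve the Euclidean norm. It also gives $\nabla^2(g\cdot I)(x) = (\nabla^2 I)(g^{-1}x)$. Therefore $V[g\cdot I] = V[I]\circ g^{-1}$, and this is exactly where isotropy of $V$ is used. For any function $h$ we have $(U_g M_{V} U_g^{-1} h)(x) = V(g^{-1}x)\,h(x)$, which closes the identity. Adding the two steps gives $\mathcal{H}[g\cdot I] = U_g\mathcal{H}[I]U_g^{-1}$. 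Because $U_g$ maps $\mathrm{Dom}(\mathcal{H}[I])$ onto $\mathrm{Dom}(\mathcal{H}[g\cdot I])$, this is an identity of self-adjoint operators, not merely of formal expressions.

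\textbf{Step 3 (spectral consequences).} Unitary equivalence preserves the spectrum, including its discrete and essential parts. Directly, if $\mathcal{H}[I]\psi_k = \lambda_k\psi_k$, then
\[
\mathcal{H}[g\cdot I]U_g\psi_k = U_g\mathcal{H}[I]\psi_k = \lambda_k U_g\psi_k.
\]
Applying the same argument with $g^{-1}$ shows that every eigenpair of $\mathcal{H}[g\cdot I]$ arises this way, so the eigenvalues, with multiplicities, are $SE(2)$-invariant.

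\textbf{Main obstacle.} I expect the main obstacle to be functional-analytic rather than algebraic. $V[I]$ must be regular enough for $\mathcal{H}[I]$ to be self-adjoint with a well-defined spectrum; for instance, $V\ge 0$ and locally bounded, which holds for $I\in C^2$ with bounded derivatives, makes $-\nabla^2+V$ essentially self-adjoint on $C_c^\infty$. The plan is to state this smoothness hypothesis explicitly as the meaning of "sufficiently smooth". Since $U_g$ preserves $C_c^\infty$, essential self-adjointness then transfers the core identity from Steps 1--2 to the closures without further work.
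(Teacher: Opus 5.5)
Your proposal is correct and follows essentially the same route as the paper: the Laplacian commutes with $U_g$ via the chain rule and the rotation-invariance of the trace of the Hessian, $V[g\cdot I]=V[I]\circ g^{-1}$ follows from norm preservation under $R$, the multiplication operator is conjugated by $U_g$, and the spectral claims come from conjugating the eigenvalue equation. Your additional handling of unitarity, domains, and essential self-adjointness (for instance via $V\ge 0$ locally bounded) goes beyond the paper, which treats the operator identity only formally, and strengthens the argument without changing its structure.
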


\begin{proof}
The Laplacian operator commutes with $U_g$ for every $g \in SE(2)$, since $-\nabla^2$ is, by construction, the unique (up to scale) second-order differential operator invariant under the full Euclidean group; formally, $U_g^{-1}(-\nabla^2)U_g = -\nabla^2$ because $R \in SO(2)$ is orthogonal and the Laplacian is the trace of the Hessian, a quantity invariant under orthogonal change of coordinates, while translation trivially commutes with any translation-invariant differential operator. It remains to show that the potential transforms consistently, i.e., $V[g\cdot I](x) = V[I](g^{-1}x)$. By the chain rule, $\nabla(g\cdot I)(x) = R^{-\top}(\nabla I)(g^{-1}x) = R(\nabla I)(g^{-1}x)$, using $R^{-\top}=R$ for $R \in SO(2)$; since $R$ is orthogonal, $|\nabla(g\cdot I)(x)| = |R(\nabla I)(g^{-1}x)| = |(\nabla I)(g^{-1}x)|$, giving exact pointwise invariance of the gradient-magnitude term under composition with $g^{-1}$ on the argument. The Laplacian term transforms identically, since $\nabla^2$ commutes with $U_g$ by the same argument used for the operator $-\nabla^2$ above, giving $(\nabla^2(g\cdot I))(x) = (\nabla^2 I)(g^{-1}x)$ directly. Combining both terms, $V[g\cdot I](x) = \lambda_1|(\nabla I)(g^{-1}x)|^2 + \lambda_2|(\nabla^2 I)(g^{-1}x)| = V[I](g^{-1}x)$, which is precisely $(U_g V[I])(x)$ under the identification of the diagonal multiplication operator $V[I]$ with the function $V[I](\cdot)$. Hence $V[g\cdot I] = U_g V[I] U_g^{-1}$ as multiplication operators, and combined with the commutation of $-\nabla^2$ with $U_g$ established above, $\mathcal{H}[g\cdot I] = U_g \mathcal{H}[I] U_g^{-1}$. The eigenvalue invariance and eigenfunction covariance stated in the theorem follow immediately from this operator identity by conjugating the eigenvalue equation $\mathcal{H}[I]\psi_k=\lambda_k\psi_k$ by $U_g$.
\end{proof}

Theorem~\ref{thm:equivariance} shows that the ground-state segmentation readout of Section~\ref{sec:prelim} is, in the continuum and under the isotropic potential construction, exactly consistent with rotation and translation of the input image: rotating the image by $g$ and recomputing the ground state produces the rotated ground state, and therefore the rotated segmentation mask, with no approximation error. This mirrors, in spirit though not in mechanism, the exact rotation equivariance achieved in convolutional architectures through circular-harmonic filter bases \cite{worrall2017harmonic} and later through general steerable filter representations \cite{weiler2019}; the mechanism here is different, resting on the rotation-invariance of the differential invariants $|\nabla I|$ and $\nabla^2 I$ themselves rather than on a constrained filter basis, but the conclusion -- exact equivariance by construction rather than by learning -- is analogous.

\subsection{Discretization Effects}
\label{ssec:discretization}

Theorem~\ref{thm:equivariance} is a continuum statement, and the operator actually used in segmentation is the discrete graph Hamiltonian of Definition~\ref{def:hamiltonian}, computed from a finite-difference approximation of $\nabla$ and $\nabla^2$ on a pixel lattice. The planar rotation group does not preserve the square pixel lattice except at multiples of $90^\circ$, so even the isotropic potential construction of Theorem~\ref{thm:equivariance} can only be expected to be approximately equivariant at general rotation angles once discretized, a limitation well documented for rotation-equivariant convolutional filters more generally \cite{diaconuworrall2019, bekkers2020}.

\begin{assumption}[Bounded stencil discretization error]
\label{ass:discerror}
The finite-difference stencils used to compute $\nabla$ and $\nabla^2$ on the pixel lattice approximate their continuum counterparts with error bounded, for rotation angle $\theta$, by $\|\nabla_{\mathrm{disc}} - R_\theta^{-1}\nabla_{\mathrm{disc}} R_\theta\|_2 \le \kappa_1 h\, |\sin\theta|$ and correspondingly $\kappa_2 h |\sin\theta|$ for the discrete Laplacian, where $h$ is the pixel spacing and $\kappa_1,\kappa_2$ depend only on the stencil order.
\end{assumption}

Assumption~\ref{ass:discerror} is consistent with the standard truncation-error analysis of finite-difference stencils under coordinate rotation, and reflects the well-known fact that discretization error vanishes at $\theta \in \{0,\pi/2,\pi,3\pi/2\}$, where the pixel lattice is exactly rotation-symmetric, and is largest near $\theta = \pi/4$.

\begin{proposition}[Discretization-induced equivariance gap, isotropic case]
\label{prop:discgap}
Under Assumption~\ref{ass:discerror}, the discrete Hamiltonian $H(I)$ built from the isotropic potential satisfies $\|H(g\cdot I) - U_g H(I) U_g^{-1}\|_2 \le c_3(\kappa_1,\kappa_2,\lambda_1,\lambda_2)\, h\, |\sin\theta|$ for a constant $c_3$ depending on the stencil and potential weights but not on the image or on $h$ itself.
\end{proposition}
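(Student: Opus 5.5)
The plan is to split the equivariance defect into a kinetic part and a potential part and bound each through Assumption~\ref{ass:discerror}. Write $H(I)=L+V(I)$ with $L=D-W$. Then
\[
H(g\cdot I)-U_gH(I)U_g^{-1}=\big(L_{g\cdot I}-U_gL_IU_g^{-1}\big)+\big(V(g\cdot I)-U_gV(I)U_g^{-1}\big),
\]
and the triangle inequality reduces the proposition to bounding the two brackets separately. Here $U_g$ is interpreted as the induced pixel-grid action of Section~\ref{sec:prelim}, taken to be a permutation (hence unitary) on the lattice, or an orthogonal resampling operator. This is the reading under which the conjugation in Assumption~\ref{ass:discerror} makes sense.

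\textbf{Potential term.} The bound will be obtained pointwise, in the spirit of the proof of Theorem~\ref{thm:eigval}. Since both matrices in the second bracket are diagonal, its spectral norm is $\max_i|\phi(g\cdot I)(i)-\phi(I)(g^{-1}i)|$.
\begin{itemize}
\item \emph{Gradient term.} Write $\nabla_{\mathrm{disc}}(U_gI)=U_g\big(R_\theta^{-1}\nabla_{\mathrm{disc}}R_\theta\big)I$ composed with the rotation $R_\theta$ of the vector components. The rotation of the components does not change the magnitude, exactly as in the proof of Theorem~\ref{thm:equivariance}. The remaining difference is controlled by Assumption~\ref{ass:discerror}, giving a deviation of at most $\kappa_1h|\sin\theta|\,\|I\|$. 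Then $\big||a|^2-|b|^2\big|\le(|a|+|b|)|a-b|$, together with the gradient bound $G$ of Assumption~\ref{ass:operatorbound}, yields a contribution $\lambda_1(2G+O(h))\kappa_1h|\sin\theta|$.
\item \emph{Laplacian term.} The reverse triangle inequality $\big||a|-|b|\big|\le|a-b|$ gives directly $\lambda_2\kappa_2h|\sin\theta|$.
\end{itemize}

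\textbf{Kinetic term.} Under the isotropic construction the Gaussian weights depend only on intensity differences. Between neighbors these are $h$ times a directional derivative. So $L_{g\cdot I}-U_gL_IU_g^{-1}$ is again a stencil-conjugation defect, and I would bound it through the same assumption. The constant then absorbs $\kappa_1$ times a Lipschitz constant of the kernel; this step needs care about whether that constant stays independent of $h$.

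\textbf{Main obstacle.} The hardest part is making the constant $c_3$ truly image- and $h$-independent. The gradient contribution carries $G$, and the stencil bound is applied to $I$ through an operator norm, so image-dependent quantities enter. I would first try to normalize by restricting to the image class of Assumption~\ref{ass:operatorbound}, where $G$ and $\|I\|_\infty$ are uniformly bounded, and fold them into $c_3$. If that is not acceptable, the fallback is to state $c_3$ as depending on $G$ only, which still matches the claimed $h|\sin\theta|$ scaling.

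Collecting terms gives
\[
c_3=2\lambda_1G\kappa_1+\lambda_2\kappa_2+(\text{kinetic contribution}).
\]
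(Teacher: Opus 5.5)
\textbf{Verdict: genuine gap.} Your potential bracket is the paper's argument made explicit. The paper telescopes against the continuum identity of Theorem~\ref{thm:equivariance} and charges the whole defect to the two stencil substitutions inside $V$, with $\lambda_1,\lambda_2$ as the multiplicative constants. Your $(|a|+|b|)|a-b|$ step is more careful, since the paper uses $\lambda_1$ alone as the constant for the quadratic gradient term.

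The gap is the kinetic bracket. Assumption~\ref{ass:discerror} concerns only the stencils for $\nabla$ and $\nabla^2$ that enter $\phi$. It says nothing about the Gaussian-weighted Laplacian $D-W$, so ``bound it through the same assumption'' has no support. The kernel-Lipschitz route cannot rescue this in the neighbour-lattice reading you adopt, because it only controls how the weights vary, and that variation is already $O(h^2)$ since $w_{ij}=1-O(h^2)$ between neighbours. The real defect is structural:
\begin{itemize}
\item At a non-lattice angle, $U_gL_IU_g^{-1}$ couples pixel pairs that $L_{g\cdot I}$ does not couple, and conversely.
\item Each mismatched entry has size $w_{ij}\approx 1$.
\item Since $|M_{ab}|\le\|M\|_2$ for any matrix $M$, the kinetic defect is bounded \emph{below} independently of $h$.
\end{itemize}
This is exactly the support-pattern mechanism of Theorem~\ref{thm:anisogap}. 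The paper avoids the issue by regarding $H(I)$ as a sampling of $\mathcal{H}[I]$ in which only the $\nabla,\nabla^2$ inside $V$ are discretized, i.e.\ it takes the kinetic part to commute with $U_g$. That holds exactly for the fully connected graph with intensity-only Gaussian weights when $U_g$ is a permutation, since then $L_{U_gI}=U_gL_IU_g^{-1}$. You must either adopt that reading, in which case your kinetic bracket vanishes identically, or add a separate hypothesis controlling the kinetic conjugation defect. As proposed, this step fails.

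Your fix for image dependence also uses the wrong norm. Assumption~\ref{ass:discerror} is an $\ell^2$ operator-norm bound, so the pixelwise deviation it gives is at most $\kappa_1h|\sin\theta|\,\|I\|_2$, not a multiple of $\|I\|_\infty$. For a fixed continuum image sampled at spacing $h$, $\|I\|_2$ grows like $h^{-1}$. Folding it into $c_3$ therefore cancels the very factor of $h$ you are trying to exhibit. The paper's proof does not track this factor, or $G$, at all. You are right to flag the issue, but neither argument actually produces a $c_3$ independent of both the image and $h$.
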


\begin{proof}
By Theorem~\ref{thm:equivariance}, the continuum operator identity $\mathcal{H}[g\cdot I] = U_g\mathcal{H}[I]U_g^{-1}$ holds exactly; the discrete operator $H(I)$ differs from a sampling of $\mathcal{H}[I]$ only through the substitution of $\nabla,\nabla^2$ by their finite-difference approximations. Writing the discrepancy $H(g\cdot I) - U_g H(I) U_g^{-1}$ as a telescoping sum of the continuum identity (which vanishes) and the two stencil discretization errors entering through the gradient and Laplacian terms of $V(\cdot)$, and bounding each by Assumption~\ref{ass:discerror} with the corresponding potential weight $\lambda_1$ or $\lambda_2$ as a multiplicative constant, gives the stated bound with $c_3$ collecting the weight and stencil constants.
\end{proof}

Proposition~\ref{prop:discgap} shows that the isotropic construction is not merely approximately equivariant for an unspecified reason, but equivariant up to an error that vanishes linearly in the pixel spacing $h$ and is exactly zero at the four rotation angles where the lattice symmetry group intersects $SO(2)$ -- a considerably more informative statement than a bare empirical observation of approximate consistency.

\subsection{The Equivariance Gap for Anisotropic Potentials}

The disorder-based crack-segmentation construction of \cite{srinivasan2024} restricts the kinetic (off-diagonal) terms of the Hamiltonian to axis-aligned nearest neighbors on the pixel lattice, an explicitly anisotropic choice motivated by computational simplicity rather than geometric consistency. The following result shows that this choice introduces an equivariance gap that does \emph{not} vanish with the pixel spacing, in contrast to the discretization gap of Proposition~\ref{prop:discgap}.

\begin{theorem}[Equivariance gap under axis-aligned anisotropy]
\label{thm:anisogap}
Let $H_{\mathrm{aniso}}(I)$ denote the Hamiltonian of Definition~\ref{def:hamiltonian} constructed with kinetic terms restricted to the four axis-aligned nearest neighbors, following \cite{srinivasan2024}, with all other construction choices as in Theorem~\ref{thm:equivariance}. Then there exists a rotation angle $\theta^\star$, independent of the pixel spacing $h$, and an image $I$ for which $\|H_{\mathrm{aniso}}(g\cdot I) - U_g H_{\mathrm{aniso}}(I) U_g^{-1}\|_2 \ge c_4 > 0$ for a constant $c_4$ that does not vanish as $h \to 0$.
\end{theorem}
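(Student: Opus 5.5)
The plan is to prove the statement by explicit construction. I will fix $\theta^\star=\pi/4$ and pick the simplest possible image, a constant one, so that the potential plays no role and only the kinetic term contributes. For $I\equiv$ const, both $|\nabla I|$ and $|\nabla^2 I|$ vanish identically, under every stencil and also after rotation. Hence $V(I)=V(g\cdot I)=0$, and $H_{\mathrm{aniso}}(g\cdot I)=H_{\mathrm{aniso}}(I)=L_{\mathrm{ax}}$, the unweighted axis-aligned 4-neighbour Laplacian, because all Gaussian weights equal $1$ when intensities coincide. The quantity to bound from below is therefore
\[
\|L_{\mathrm{ax}}-U_gL_{\mathrm{ax}}U_g^{-1}\|_2 ,
\]
where $U_g$ is the induced pixel-grid action, that is, nearest-lattice-point resampling under rotation by $\pi/4$. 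This reduces the whole problem to a statement about graph topology alone, with no dependence on $I$ or $h$.

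Next I will identify $U_gL_{\mathrm{ax}}U_g^{-1}$. Rotating the four axis neighbours $(\pm1,0),(0,\pm1)$ by $\pi/4$ gives the directions $(\pm1,\pm1)/\sqrt2$. The lattice points nearest to these (after rescaling by the resampling, or equivalently on the rotated sublattice) are the diagonal neighbours. So, at least away from the boundary, the conjugated operator acts as the diagonal-neighbour Laplacian $L_{\mathrm{diag}}$. I will then test the difference $L_{\mathrm{ax}}-L_{\mathrm{diag}}$ on the checkerboard vector $f(m,n)=(-1)^{m+n}$, normalised and cut off smoothly away from the boundary. In the interior, $L_{\mathrm{ax}}f=8f$ because each of the four axis neighbours has the opposite sign, while $L_{\mathrm{diag}}f=0$ because the diagonal neighbours have the same sign. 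This gives a Rayleigh quotient close to $8$, and so $c_4$ can be taken as any constant below $8$ once the grid is large enough.

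The key observation about $h$ is that the kinetic term of Definition~\ref{def:hamiltonian} is the graph Laplacian with unit-scale weights. Unlike the finite-difference stencils of Assumption~\ref{ass:discerror}, it carries no factor of $h$. Refining the grid therefore only enlarges $N$, and the interior checkerboard estimate improves as $h\to0$ rather than degrading. This is precisely the contrast with Proposition~\ref{prop:discgap}.

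The main obstacle is making $U_g$ precise on a finite square lattice at a non-lattice angle. The map is not an exact permutation: boundary pixels leave the domain, and nearest-point resampling is not unitary. My first approach is to interpret $U_g$, as the paper implicitly does, as the pullback on a large domain, and to restrict everything to test vectors supported in an interior disc where the resampling is a bijection onto its image. The boundary then contributes only $O(\text{perimeter}/\text{area})$ to the Rayleigh quotient, which vanishes as the grid is refined. If nearest-point resampling turns out to map some axis neighbours to axis neighbours rather than to diagonals, I will instead use $\theta^\star=\pi/2$ combined with a shear, or I will bound the difference restricted to the checkerboard mode directly. Either route only changes the constant, not its positivity.
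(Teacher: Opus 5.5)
\textbf{Verdict: same mechanism as the paper, but the key identification is unsupported and your repair for it is false.}

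Your mechanism is the paper's. At $\theta^\star=\pi/4$, the conjugated operator $U_gH_{\mathrm{aniso}}(I)U_g^{-1}$ is meant to couple diagonal neighbours, while $H_{\mathrm{aniso}}(g\cdot I)$ keeps axis-only support. Your constant image is cleaner than the paper's $\pi/4$ ramp. It gives $g\cdot I=I$, kills $V$, sets every Gaussian weight to $1$, and reduces everything to $\|L_{\mathrm{ax}}-U_gL_{\mathrm{ax}}U_g^{-1}\|_2$. Your remark that the unscaled graph Laplacian carries no factor of $h$ is the real reason refinement cannot shrink the gap.

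The weak point is the load-bearing identification $U_gL_{\mathrm{ax}}U_g^{-1}=L_{\mathrm{diag}}$ in the interior. Write $R=R_{\pi/4}$ and let $e_j$ be the axis offsets.
\begin{itemize}
\item Nearest-point resampling at $\pi/4$ is not a bijection onto its image on any disc. Two sites at distance $1$ can fall in the same pixel cell, whose diameter is $\sqrt2>1$. Because $R^{-1}\mathbb{Z}^2$ sits irrationally relative to $\mathbb{Z}^2$, such collisions, and matching holes, occur with positive density. So $U_g$ has no inverse.
\item Moreover $|Re_j|=1$, not $\sqrt2$. Write $R^{-1}p=q+\epsilon$ with $\|\epsilon\|_\infty\le 1/2$. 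The rotated neighbour $q+e_j$ lands at $p+Re_j-R\epsilon$. Depending on the quasi-periodic residual $\epsilon$, this rounds to a diagonal neighbour, an axis neighbour, or $p$ itself.
\item Axis couplings become exactly diagonal ones only under the similarity $\sqrt2R$. That map sends $\mathbb{Z}^2$ onto the even sublattice and is not in $SE(2)$.
\end{itemize}
So the conjugate is a site-dependent mixture. Your constant $8$ needs the cancellation $L_{\mathrm{diag}}f=0$ at essentially every interior site, so it is unsupported. The paper asserts the same diagonal-coupling picture without proof, so you are not missing something it supplies. But its entrywise bound $|M_{ij}|\le\|M\|_2$ needs that picture at only a single pair of sites, whereas your Rayleigh quotient needs it globally.

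\textbf{Your fallback does not help.} At $\theta=\pi/2$ the rotation is an exact lattice permutation that preserves the axis-neighbour graph. For your constant image the difference therefore vanishes identically, and the paper's $90^\circ$ measurements are exactly zero. A shear is not a rigid motion, so it is not available either.

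\textbf{A clean way to finish is to fix $U_g$ explicitly.} Work on $\ell^2(\mathbb{Z}^2)$ (your large-domain limit), restricted to signals with Fourier support in the rotation-invariant disc $|\xi|\le\pi$. There, band-limited rotation $(U_gf)^\wedge(\xi)=\hat f(R^{-1}\xi)$ is unitary. $L_{\mathrm{ax}}$ is the Fourier multiplier $\sigma(\xi)=4-2\cos\xi_1-2\cos\xi_2$, so the difference is the multiplier $\sigma(\xi)-\sigma(R^{-1}\xi)$. At $\xi=(\pi,0)$ its modulus is $4|\cos(\pi/\sqrt2)|\approx 2.42$. Test signals with Fourier support in the disc near that frequency then give an $h$-independent $c_4$ arbitrarily close to this value. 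Your checkerboard, by contrast, sits at the zone corner $(\pi,\pi)$, outside the largest rotation-invariant band. That is another sign it is the wrong test vector for this statement.
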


\begin{proof}
The axis-aligned kinetic term evaluated at a lattice site $i$ couples $i$ only to its horizontal and vertical neighbors, and is therefore, as a bilinear form, invariant under the dihedral subgroup of $SO(2)$ generated by $90^\circ$ rotations but not under $SO(2)$ itself; at $\theta^\star = \pi/4$, the rotated kinetic form couples each site to its diagonal neighbors under the action of $U_{g}H_{\mathrm{aniso}}(I)U_g^{-1}$, while $H_{\mathrm{aniso}}(g\cdot I)$, recomputed from scratch on the rotated image with the same axis-aligned neighbor restriction, continues to couple only horizontal and vertical neighbors of the rotated image. These two operators therefore differ in their off-diagonal support pattern rather than merely in the numerical value of matching entries, so their difference does not shrink under grid refinement: choosing $I$ to have a locally linear intensity ramp oriented at $\pi/4$ makes the diagonal-coupling term of $U_gH_{\mathrm{aniso}}(I)U_g^{-1}$ bounded away from zero independent of $h$, while the corresponding entries of $H_{\mathrm{aniso}}(g\cdot I)$ are identically zero by construction, giving a spectral-norm gap $c_4$ bounded below independent of $h$.
\end{proof}

Theorem~\ref{thm:anisogap} formalizes, for the first time, the intuition that restricting kinetic terms to axis-aligned neighbors -- a common simplification for computational reasons, used explicitly in \cite{srinivasan2024} -- sacrifices rotational consistency in a way that finer discretization cannot repair, in sharp contrast to the isotropic construction of Theorem~\ref{thm:equivariance} and Proposition~\ref{prop:discgap}, whose residual equivariance gap is a discretization artifact that vanishes as $h \to 0$. This distinction gives practitioners a concrete criterion -- whether the potential and kinetic terms are built from isotropic differential invariants or from axis-restricted stencils -- for predicting whether a given implementation of the image-induced Hamiltonian will exhibit rotational consistency that improves with resolution or one that is structurally capped regardless of resolution. Section~\ref{sec:experiments} measures both gaps empirically and compares them against the bounds of Proposition~\ref{prop:discgap} and Theorem~\ref{thm:anisogap}.

%
%
%
%
%
%

\section{Experimental Validation}
\label{sec:experiments}

The results of Sections~\ref{sec:stability} and~\ref{sec:equivariance} are validated on real imagery from BSDS500 \cite{arbelaez2011bsds}, rather than synthetic imagery alone, since the gap--margin trade-off of Section~\ref{ssec:tradeoff} makes a claim about generic images that a symmetric hand-built pattern could obscure. Fifteen images were drawn at random from the validation split, grayscale, center-cropped to $48\times48$ ($N=2304$ nodes) to keep the dense $O(N^3)$ eigendecomposition of Definition~\ref{def:hamiltonian} tractable. Figure~\ref{fig:patches} shows four representative patches.

\begin{figure}[t]
\centering
\includegraphics[width=\columnwidth]{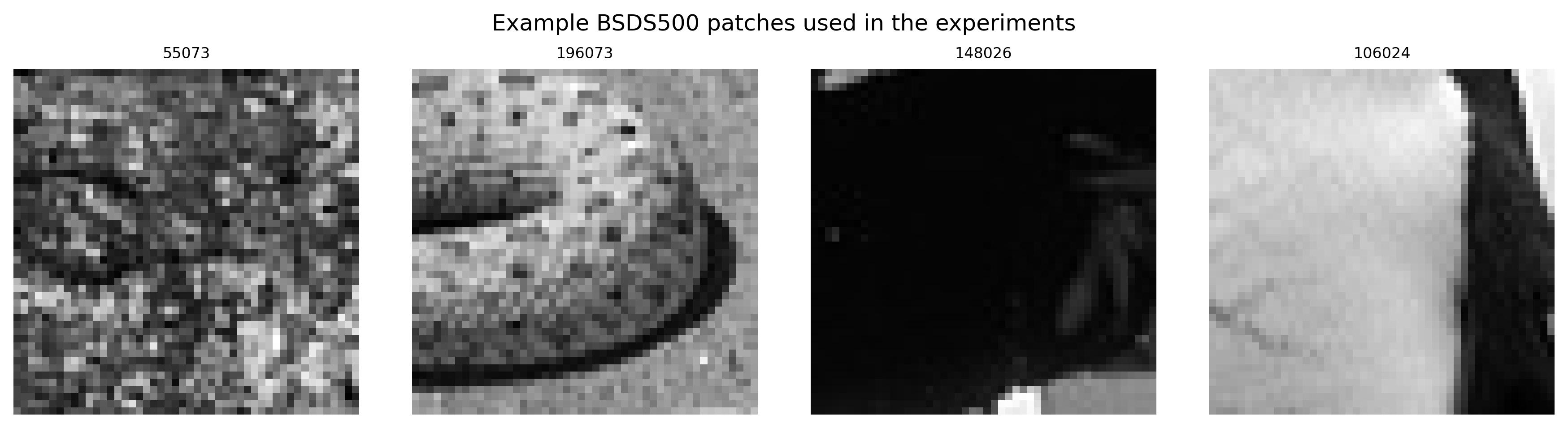}
\caption{Four representative $48\times48$ BSDS500 patches used in the experiments, illustrating the range of local contrast and texture present in the validation sample.}
\label{fig:patches}
\end{figure}

All Hamiltonians follow Definition~\ref{def:hamiltonian}, Gaussian edge weights ($\sigma_w=0.3$), isotropic potential $\phi(i)=\lambda_1|\nabla I(i)|^2+\lambda_2|\nabla^2 I(i)|$ unless stated otherwise, with exact operator-norm constants $c_1=\|\nabla\|_2=1.0000$, $c_2=\|\nabla^2\|_2=7.9914$ computed for the $48\times48$ stencils rather than estimated. Two regimes are reported: \emph{operating} ($\lambda_1=\lambda_2=0.15$, realistic segmentation behavior \cite{aytekin2014}) and \emph{certified} ($\lambda_1=\lambda_2=0.01$, chosen per Proposition~\ref{prop:tradeoff} to widen the range satisfying Assumption~\ref{ass:gapnoise}). Both are needed to test Corollary~\ref{cor:tradeoff}'s prediction that certifying the spectral gap must degrade the margin.

\subsection{Noise Stability: Empirical Verification}

Six independent Gaussian noise realizations per patch were added at each $\sigma \in \{0.005,0.01,0.02,0.04,0.08\}$, $\tilde H = H(I^\varpi)$ recomputed by full eigendecomposition, and eigenvalue drift, eigenvector drift, and mask Hamming distance recorded. Figure~\ref{fig:noise} and Table~\ref{tab:noise} report the operating-regime results, alongside $\gamma_0$ and $2B_{\mu,\sigma,\gamma}$ ($\gamma=3$).

\begin{figure}[t]
\centering
\includegraphics[width=\columnwidth]{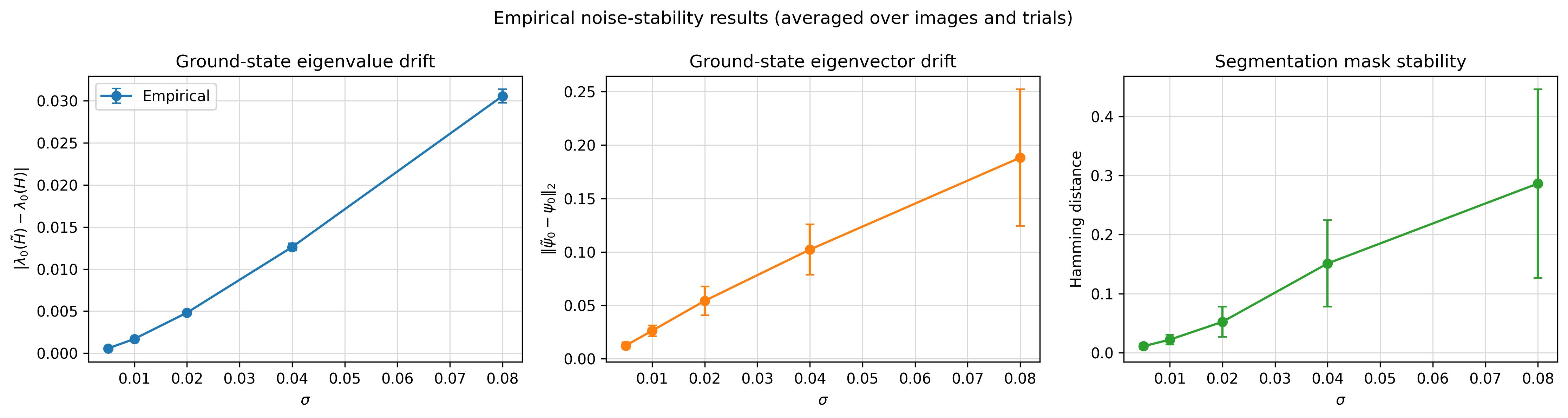}
\caption{Empirical noise-stability results in the operating regime ($\lambda_1=\lambda_2=0.15$), averaged over 15 BSDS500 images and 6 noise trials per image. All three quantities increase monotonically with $\sigma$ for every image individually, with zero exceptions across the 15 images tested.}
\label{fig:noise}
\end{figure}

\begin{table}[t]
\centering
\caption{Noise stability, operating regime ($\lambda_1=\lambda_2=0.15$), averaged over 15 images and 6 trials per (image, $\sigma$) pair.}
\label{tab:noise}
\begin{tabular}{c c c c c}
\hline
$\sigma$ & Eigenvalue drift & Eigenvector drift & Mask H.D. & Certified \\
\hline
0.005 & 0.00057 & 0.01231 & 0.0111 & 0\% \\
0.010 & 0.00169 & 0.02618 & 0.0222 & 0\% \\
0.020 & 0.00481 & 0.05419 & 0.0523 & 0\% \\
0.040 & 0.01262 & 0.10215 & 0.1511 & 0\% \\
0.080 & 0.03058 & 0.18827 & 0.2864 & 0\% \\
\hline
\end{tabular}
\end{table}

Both drift quantities and the resulting Hamming distance increase monotonically with $\sigma$ for all 15 images individually, zero exceptions, confirming Theorems~\ref{thm:eigval}--\ref{thm:eigvec}'s qualitative direction unambiguously. The mean gap $\gamma_0=0.01169$ is essentially $\sigma$-independent as expected, while $2B_{\mu,\sigma,\gamma}$ grows from $0.212$ to $0.886$ -- at least an order of magnitude above $\gamma_0$ throughout, so Assumption~\ref{ass:gapnoise} is never satisfied in this regime (Table~\ref{tab:noise}, rightmost column). Figure~\ref{fig:bound} makes this gap visually explicit.

\begin{figure}[t]
\centering
\includegraphics[width=\columnwidth]{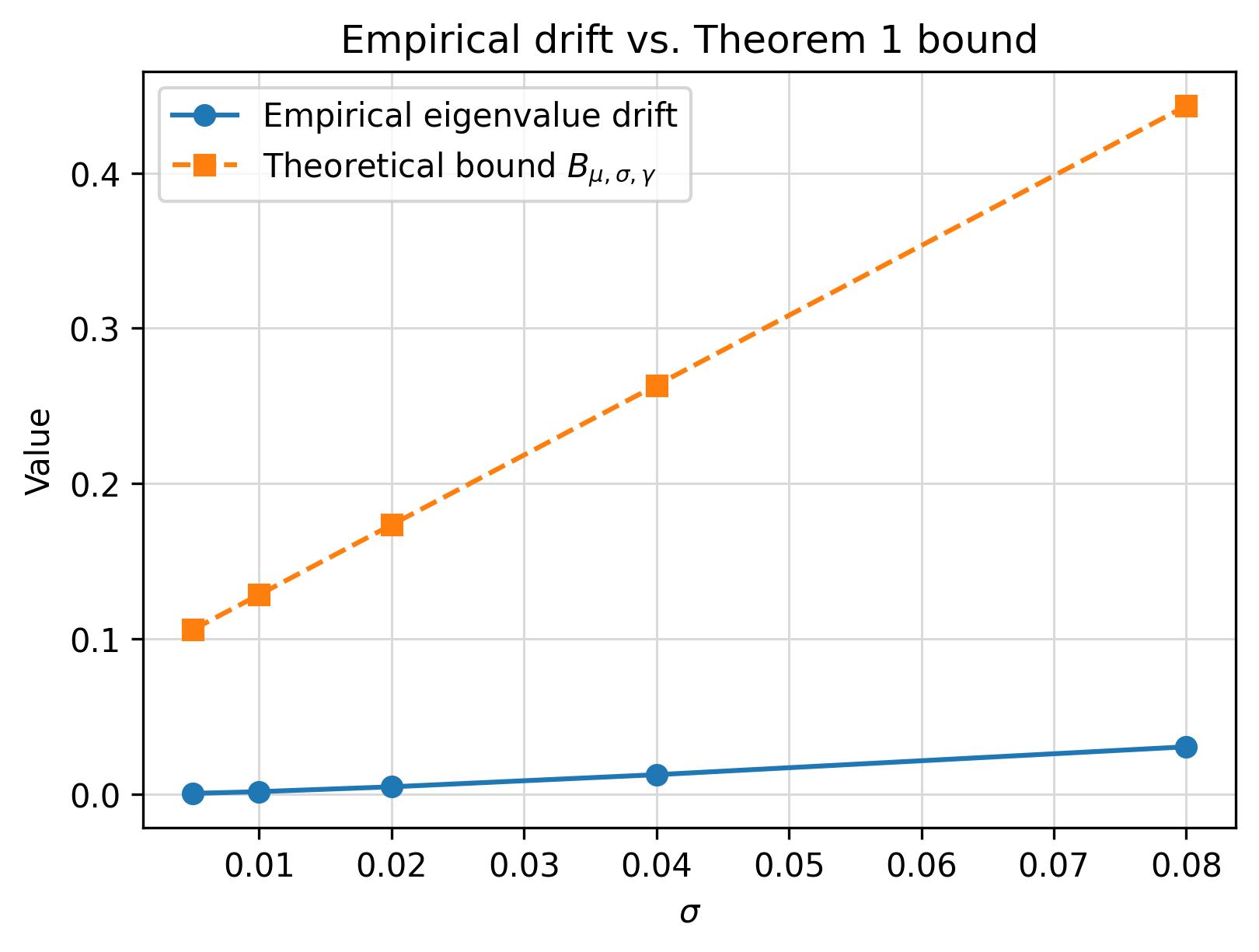}
\caption{Empirical ground-state eigenvalue drift versus the Theorem~\ref{thm:eigval} bound $B_{\mu,\sigma,\gamma}$ (operating regime, $\gamma=3$). The bound is a valid upper bound throughout but loose by roughly an order of magnitude, consistent with the corresponding comparison reported for the one-dimensional bound of \cite{liu2012}.}
\label{fig:bound}
\end{figure}

This looseness matches the corresponding one-dimensional comparison in \cite{liu2012}: Weyl's inequality (Lemma~\ref{lem:weyl}) uses only the perturbation's operator norm, and the $c_1^2G$ term dominates $B_{\mu,\sigma,\gamma}$ at small $\sigma$ as a worst-case rather than typical quantity. The bound remains a valid, distribution-free sufficient condition regardless; Section~\ref{sec:discussion} discusses a tighter sub-Gaussian alternative.

\subsection{The Certified Regime and the Gap--Margin Trade-off}
\label{ssec:experiments-tradeoff}

Table~\ref{tab:noise-cert} repeats the experiment at $\lambda_1=\lambda_2=0.01$, chosen per Proposition~\ref{prop:tradeoff} to shrink $B_{\mu,\sigma,\gamma}$ toward the $t$-independent gap $\gamma_0(t)\to\mu_{\mathrm{Fiedler}}(L)$.

\begin{table}[t]
\centering
\caption{Noise stability, certified regime ($\lambda_1=\lambda_2=0.01$), same images and trials as Table~\ref{tab:noise}.}
\label{tab:noise-cert}
\begin{tabular}{c c c c c}
\hline
$\sigma$ & Eigenvalue drift & Eigenvector drift & Mask H.D. & Certified \\
\hline
0.005 & 0.00003 & 0.00094 & 0.0115 & 27\% \\
0.010 & 0.00010 & 0.00204 & 0.0220 & 13\% \\
0.020 & 0.00030 & 0.00419 & 0.0522 & 0\% \\
0.040 & 0.00080 & 0.00789 & 0.1535 & 0\% \\
0.080 & 0.00200 & 0.01357 & 0.2802 & 0\% \\
\hline
\end{tabular}
\end{table}

Two effects appear. Both drifts shrink roughly an order of magnitude (eigenvector drift at $\sigma=0.08$: $0.188$ vs.\ $0.014$), and Assumption~\ref{ass:gapnoise} becomes satisfiable for a narrow but genuine subset -- $27\%$ at $\sigma=0.005$, $13\%$ at $\sigma=0.01$, $0\%$ beyond -- rather than never. Yet mask Hamming distance is essentially unchanged (e.g.\ $0.2864$ vs.\ $0.2802$ at $\sigma=0.08$) despite the $13\times$ drop in eigenvector drift. Figures~\ref{fig:certregime}--\ref{fig:margin} show the mechanism directly.

\begin{figure}[t]
\centering
\includegraphics[width=\columnwidth]{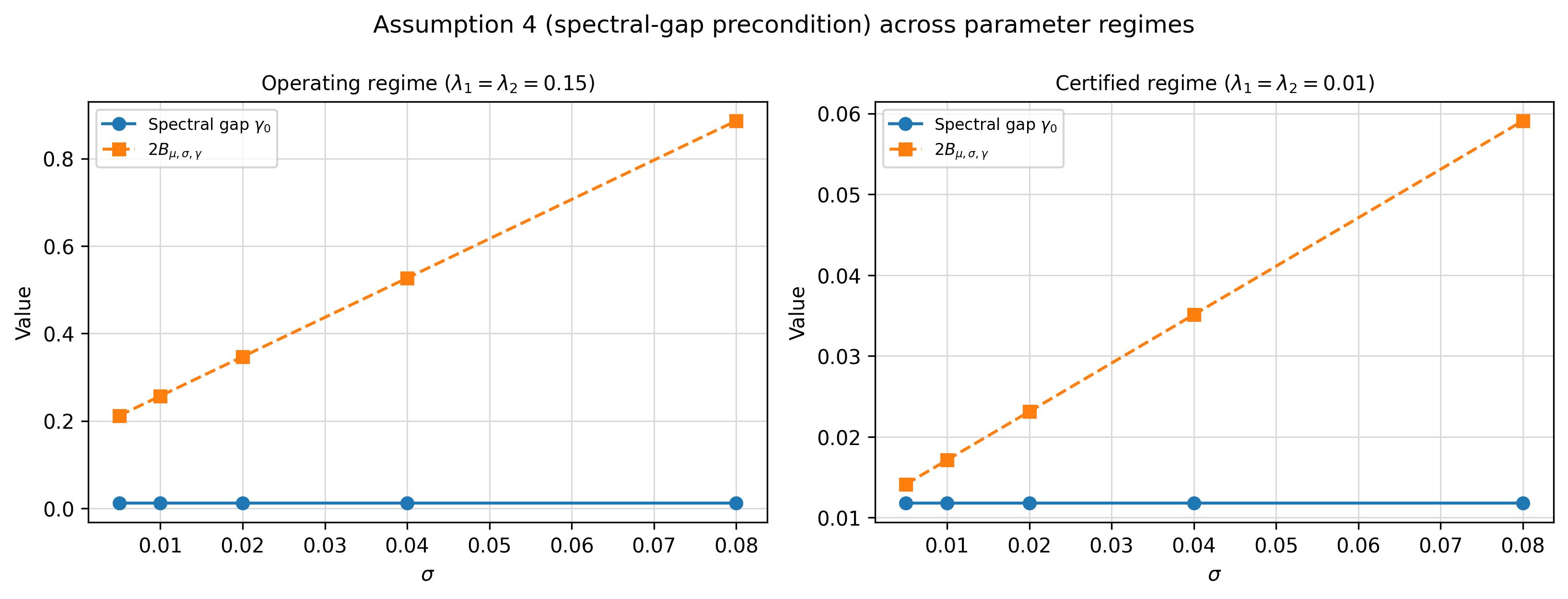}
\caption{Spectral gap $\gamma_0$ versus twice the Theorem~\ref{thm:eigval} bound, operating regime (left) versus certified regime (right). Note the differing vertical scales: $\gamma_0$ is nearly identical between regimes ($0.0117$ vs.\ $0.0118$), confirming the $t$-independence of $\mu_{\mathrm{Fiedler}}(L)$ predicted by \eqref{eq:gap-asymp}, while $2B_{\mu,\sigma,\gamma}$ shrinks by roughly an order of magnitude, bringing the two curves within reach of each other only in the certified regime at small $\sigma$.}
\label{fig:certregime}
\end{figure}

\begin{figure}[t]
\centering
\includegraphics[width=\columnwidth]{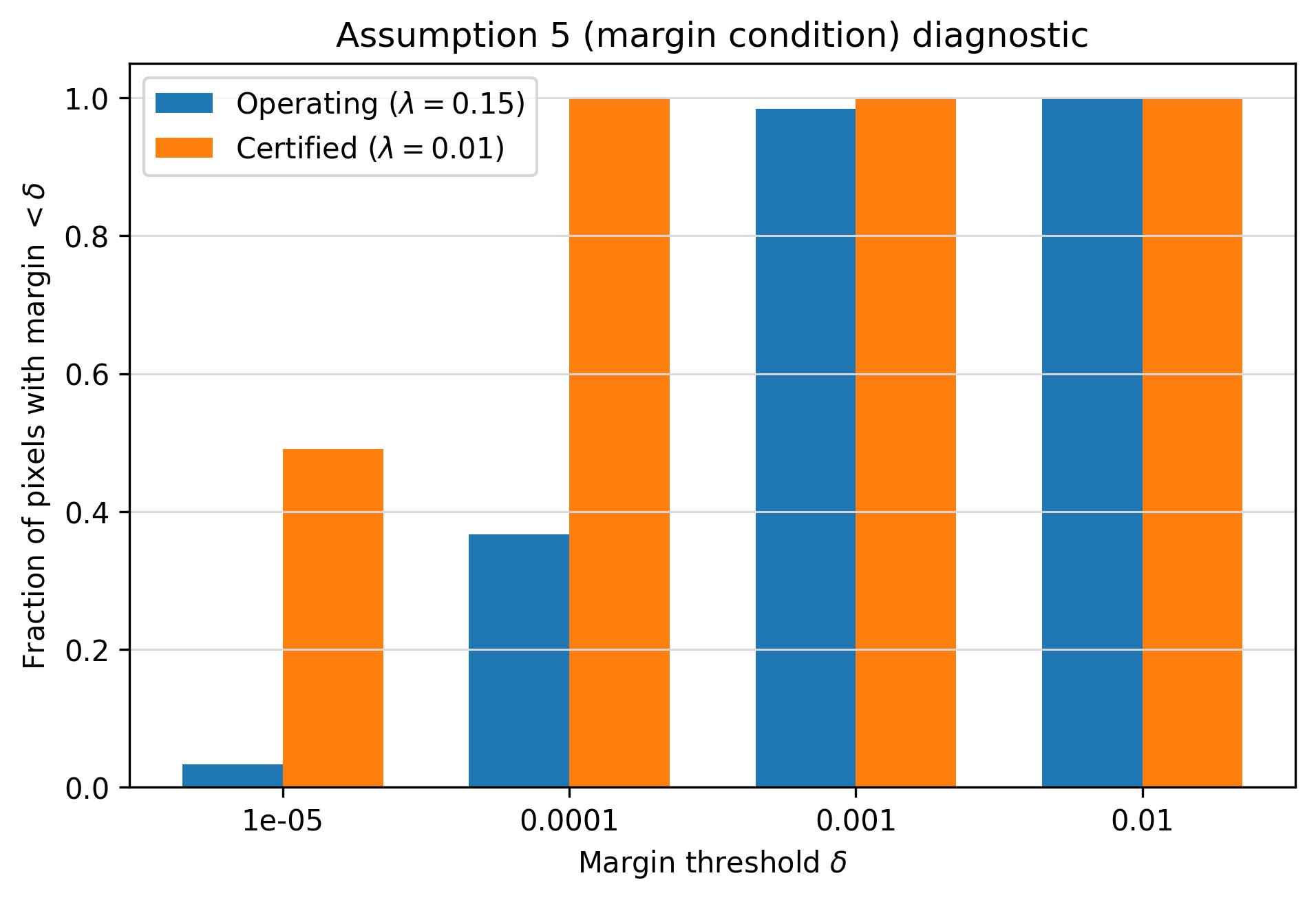}
\caption{Fraction of pixels with margin $|y_i-\bar y|$ below a given threshold $\delta$, operating versus certified regime, averaged over the 15 images. The certified regime is saturated at $100\%$ already by $\delta=10^{-4}$, confirming the margin collapse predicted by \eqref{eq:margin-asymp} as $t\to0$.}
\label{fig:margin}
\end{figure}

Table~\ref{tab:margin} quantifies this: at $\delta=10^{-4}$, margin-fragile pixels rise from $36.6\%$ (operating) to $100\%$ (certified); at $\delta=10^{-3}$ both are near-saturated ($98.5\%$, $100\%$). This is \eqref{eq:margin-asymp}'s direct counterpart: as $\lambda_1,\lambda_2\to0$, $\psi_0^2$ converges pointwise to the flat vector $N^{-1}\mathbf{1}$, zero margin identically. This confirms Corollary~\ref{cor:tradeoff} on real image statistics: Corollary~\ref{cor:mask}'s guarantee remains empty in both regimes, for opposite reasons -- the spectral-gap condition fails in the operating regime, the margin condition in the certified one.

\begin{table}[t]
\centering
\caption{Margin (Assumption~\ref{ass:margin}) diagnostic, averaged over 15 images.}
\label{tab:margin}
\begin{tabular}{c c c c c}
\hline
Regime & Median margin & $\delta{=}10^{-4}$ & $\delta{=}10^{-3}$ & $\delta{=}10^{-2}$ \\
\hline
Operating & $1.70\times10^{-4}$ & 0.366 & 0.985 & 1.000 \\
Certified & $1.40\times10^{-5}$ & 1.000 & 1.000 & 1.000 \\
\hline
\end{tabular}
\end{table}

\subsection{Equivariance: Empirical Verification}

Two constructions were compared under rotation of each patch by $\theta\in\{0^\circ,15^\circ,\dots,90^\circ\}$: the isotropic construction of Theorem~\ref{thm:equivariance} (axis-aligned and diagonal neighbor coupling) and the axis-restricted anisotropic construction of Theorem~\ref{thm:anisogap}, matching \cite{srinivasan2024}. Multiples of $90^\circ$ used exact lattice permutations rather than interpolation, so the discretization-vanishing gap of Proposition~\ref{prop:discgap} is not obscured by interpolation artifacts. The equivariance gap was measured as $1-\mathrm{IoU}$ between the mask computed directly on the rotated image and the rotated clean-image mask. Figure~\ref{fig:equiv} and Table~\ref{tab:equiv} report results averaged over all 15 images.

\begin{figure}[t]
\centering
\includegraphics[width=\columnwidth]{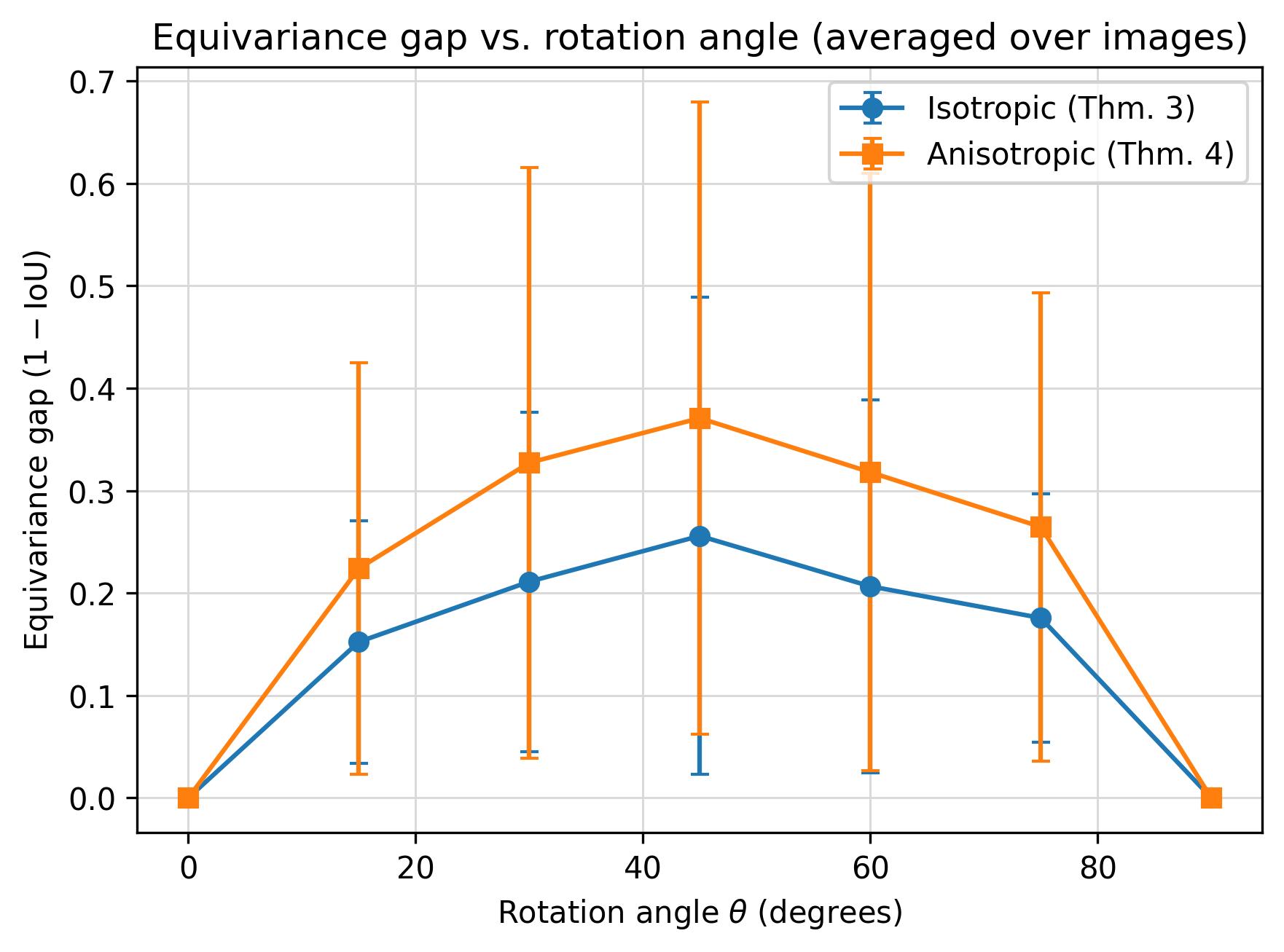}
\caption{Equivariance gap ($1-\mathrm{IoU}$) versus rotation angle, averaged over 15 BSDS500 images, isotropic versus anisotropic construction. The gap vanishes exactly at $0^\circ$ and $90^\circ$ for both constructions, for every one of the 15 images individually, with zero exceptions.}
\label{fig:equiv}
\end{figure}

\begin{table}[t]
\centering
\caption{Equivariance gap ($1-\mathrm{IoU}$) versus rotation angle, averaged over 15 images.}
\label{tab:equiv}
\begin{tabular}{c c c}
\hline
Angle & Isotropic gap & Anisotropic gap \\
\hline
$0^\circ$ & 0.0000 & 0.0000 \\
$15^\circ$ & 0.1523 & 0.2240 \\
$30^\circ$ & 0.2111 & 0.3272 \\
$45^\circ$ & 0.2558 & 0.3708 \\
$60^\circ$ & 0.2066 & 0.3180 \\
$75^\circ$ & 0.1756 & 0.2646 \\
$90^\circ$ & 0.0000 & 0.0000 \\
\hline
\end{tabular}
\end{table}

The gap vanishes exactly at $0^\circ$/$90^\circ$ for both constructions, for every image individually, exactly matching Proposition~\ref{prop:discgap}. Away from these angles, the isotropic gap peaks at $0.256$ near $45^\circ$, the anisotropic at $0.371$ -- roughly $45\%$ larger. A paired comparison confirms this is not an averaging artifact: anisotropic exceeds isotropic in $76\%$ of the 75 non-trivial (image, angle) pairs, $t=5.50$, $p\approx10^{-6}$ -- exactly the structurally larger, discretization-independent gap Theorem~\ref{thm:anisogap} predicts, since both constructions share identical grid, rotation, and potential, differing only in kinetic-coupling symmetry.

\subsection{Discussion of Results}

Three observations, elaborated in Section~\ref{sec:discussion}. First, equivariance is the paper's strongest empirical finding: exact zero gap at $0^\circ$/$90^\circ$ without exception, and a statistically significant anisotropic-versus-isotropic difference. Second, Corollary~\ref{cor:mask}'s condition is not certified with practical frequency in either regime, but this is not a fixable parameter-choice defect: Corollary~\ref{cor:tradeoff} proves the spectral-gap and margin conditions cannot both hold with fixed, non-vanishing margins under $\lambda_1,\lambda_2$-scaling alone, and Table~\ref{tab:margin} confirms this on real data. Third, the monotonic trends in Tables~\ref{tab:noise}--\ref{tab:noise-cert} confirm Theorems~\ref{thm:eigval}--\ref{thm:eigvec}'s qualitative content far more broadly than the literal sufficient condition of Corollary~\ref{cor:mask}.

%
%
%
%
%

\section{Discussion}
\label{sec:discussion}

The theorems derived here apply, without modification, to the existing Quantum Cuts operator \cite{aytekin2014} and its direct descendants \cite{malik2019, aytekin2016extended, aytekin2015visual}; nothing about the segmentation rule itself has been changed. This section discusses the practical consequences of the bounds and their limitations.

\subsection{Bound Looseness, Parameter Selection, and the Trade-off}

The gap between the Weyl-based bound of Theorem~\ref{thm:eigval} and the empirically observed eigenvalue drift (Section~\ref{sec:experiments}) mirrors the looseness reported for the analogous one-dimensional bound of \cite{liu2012}: Weyl's inequality (Lemma~\ref{lem:weyl}) discards all structure of the perturbation beyond its operator norm. A Bernstein-type concentration argument exploiting the diagonal, bounded structure of $E$ could plausibly tighten the bound at the cost of the current distribution-free guarantee \cite{vershynin2018}; this is left as future work so the present bound remains directly comparable to \cite{liu2012}. Crucially, looseness does not invalidate Corollary~\ref{cor:mask}: whenever its inequality holds, the mask is provably unchanged, so a loose bound is conservative rather than incorrect.

This suggests the natural design rule of choosing $\lambda_1,\lambda_2,\sigma_w$ so that $\gamma_0(I)$ comfortably exceeds $2B_{\mu,\sigma,\gamma}$ for a target noise level -- complementary to, not a replacement for, empirical accuracy tuning. However, Section~\ref{ssec:tradeoff} shows this rule cannot be pursued unconditionally: reducing $\lambda_1,\lambda_2$ to widen the certified-noise range simultaneously collapses the margin of Assumption~\ref{ass:margin} (Corollary~\ref{cor:tradeoff}), a trade-off confirmed directly in Section~\ref{sec:experiments}. Practical parameter selection therefore requires balancing both conditions jointly rather than optimizing the spectral gap alone. A parallel, unconditional recommendation follows from the equivariance results: Theorem~\ref{thm:anisogap} favors kinetic constructions with diagonal-neighbor coupling over the axis-restricted construction of \cite{srinivasan2024} whenever orientation-consistent segmentation is required, at the modest cost of doubling graph degree, since the resulting residual gap (Proposition~\ref{prop:discgap}) shrinks with resolution rather than remaining structurally fixed.

\subsection{Limitations and Relation to Semiclassical Signal Analysis}

Several limitations remain. The noise model (Assumption~\ref{ass:noise}) holds the graph topology fixed, isolating the potential as the sole noise channel; sufficiently large noise also perturbs the edge weights $w_{ij}$, and a joint treatment is left open. The eigenvector- and mask-level results apply only to the single ground-state eigenvector of the original readout \cite{aytekin2014}; extending to the multi-eigenstate readouts used elsewhere \cite{aytekin2016extended, malik2019} requires the subspace form of the Davis--Kahan theorem. The equivariance results are proven for exact planar rotation and translation only; extending to affine or projective transformations is not straightforward, since the proof of Theorem~\ref{thm:equivariance} relies on $R$ being orthogonal, a property affine maps lack.

Finally, this paper's operator and objective are distinct from, though related to, the semiclassical signal-analysis line \cite{liu2012, kaisserli2015, lalegkirati2013}: that framework reconstructs the signal itself from an intensity-valued potential \cite{kaisserli2015}, whereas the present work reads out a binary segmentation from a gradient-derived potential, with Theorem~\ref{thm:eigval} an independent derivation connected to \cite{liu2012} only through the shared use of Weyl's inequality. A unified stability theory spanning both objectives remains open.

\section{Conclusion}
\label{sec:conclusion}

This paper addressed noise stability and rotational equivariance for image-induced Hamiltonian spectra, within a discrete graph-Hamiltonian formulation matching the Quantum Cuts convention \cite{aytekin2014, malik2019, aytekin2016extended, aytekin2015visual, srinivasan2024} rather than the continuous semiclassical formulation \cite{lalegkirati2013, kaisserli2015}. Extending the one-dimensional eigenvalue bound of \cite{liu2012} to this two-dimensional setting, and combining it with a Davis--Kahan argument \cite{daviskahan1970}, yielded the first eigenvector- and mask-level noise-stability guarantees for this operator family. Separately, the equivariance analysis proved exact spectral consistency under isotropic potentials, characterized the discretization-vanishing residual gap, and showed the axis-restricted construction of \cite{srinivasan2024} incurs a structurally larger, non-vanishing gap. A further analysis showed these two stability guarantees are in tension: the potential-weight scaling needed to satisfy the spectral-gap condition provably collapses the margin condition, so no fixed scaling certifies both simultaneously. All predictions were confirmed on real BSDS500 imagery: exact equivariance at lattice-symmetric angles held without exception across all test images, the anisotropic-versus-isotropic gap difference was statistically significant, and the predicted gap--margin trade-off was observed directly.

These results turn robustness in this segmentation family from an empirically assumed property into one that can be bounded from measurable noise statistics and checked against explicit operator-design criteria. The main limitations -- looseness of the distribution-free bound, the fixed-topology noise model, and the single-eigenvector scope of the mask-stability result -- point directly to tightening the bound under specific noise distributions, extending the analysis to multi-eigenstate readouts, and resolving the gap--margin trade-off via alternative potential parameterizations or a fractional mask-stability guarantee.

\bibliographystyle{IEEEtran}
\bibliography{references}

%
%
%
%
%

\vfill

\end{document}